\newtheorem{theorem}{Theorem}
\newtheorem{lemma}{Lemma}
\newtheorem{corollary}{Corollary}
\def\comment#1{}
\def\withcomments{
  \newcounter{mycommentcounter}
  \def\comment##1{\refstepcounter{mycommentcounter}%
   \ifhmode%
    \unskip%
    {\dimen1=\baselineskip \divide\dimen1 by 2 %
      \raise\dimen1\llap{\tiny -\themycommentcounter-}}\fi%
    \marginpar{\renewcommand{\baselinestretch}{0.8}%
      \footnotesize [\themycommentcounter]: \raggedright ##1}}
  }
\newcommand{\algo}[1]{{\normalfont\scshape #1}}   
\newcommand{\dcup}{\ensuremath{\mathaccent\cdot\cup}}
\newcommand{\Gp}{\ensuremath{G^\oplus}}
\newcommand{\Gm}{\ensuremath{G^\ominus}}
\newcommand{\cmm}{\ensuremath{c_\ominus}}
\title{Dynamic Gomory-Hu Tree Construction---fast and simple\thanks{This
    work was partially supported by the DFG under grant WA 654/15-2
    and by the Concept for the Future of Karlsruhe Institute of
    Technology within the German Excellence Initiative.  }} \author{
  Tanja Hartmann and Dorothea Wagner } \date{ Department of
  Informatics, Karlsruhe Institute of Technology (KIT)\\
  \texttt{\{t.hartmann,dorothea.wagner\}@kit.edu} }
\begin{document}

\begin{bibunit}
\setcounter{totalnumber}{8}
\setcounter{topnumber}{8}
\setcounter{bottomnumber}{8}
\renewcommand{\textfraction}{0.001}
\renewcommand{\topfraction}{0.999}
\renewcommand{\bottomfraction}{0.999}
\maketitle
\begin{abstract}
  A cut tree (or Gomory-Hu tree) of an undirected weighted graph
  $G=(V,E)$ encodes a minimum $s$-$t$-cut for each vertex pair
  $\{s,t\} \subseteq V$ and can be iteratively constructed by $n-1$
  maximum flow computations.  They solve the multiterminal network
  flow problem, which asks for the all-pairs maximum flow values in a
  network and at the same time they represent $n-1$ non-crossing,
  linearly independent cuts that constitute a minimum cut basis of
  $G$.
Hence, cut trees are resident in at least two fundamental fields of network analysis and graph theory, which emphasizes their importance for many applications.
In this work we present a fully-dynamic algorithm that efficiently maintains a cut tree for a changing graph. The algorithm is easy to implement and has a high potential for saving cut computations under the assumption that a local change in the underlying graph does rarely affect the global cut structure.
We document the good practicability of our approach in a brief experiment on real world data.
\end{abstract}
\section{Introduction}\label{sec:Intro}
A \emph{cut tree} is a weighted tree $T(G) = (V, E_T,c_T)$ on the vertices of an undirected (weighted) graph $G = (V,E,c)$ (with edges not necessarily in $G$) such that each $\{u,v\} \in E_T$ induces a minimum $u$-$v$-cut in $G$ (by decomposing $T(G)$ into two connected components) and such that $c_T(\{u,v\})$ is equal to the cost of the induced cut.
The cuts induced by $T(G)$ are non-crossing and for each $\{x,y\} \subseteq {V}$ each cheapest edge on the path $\pi(x,y)$ between $x$ and $y$ in $T(G)$ corresponds to  a minimum $x$-$y$-cut in $G$. If~$G$ is disconnected,  $T(G)$ contains edges of cost $0$ between connected components.

Cut trees were first introduced by Gomory and Hu~\cite{gh-mtnf-61} in 1961 in the field of multiterminal network flow analysis. Shortly afterwards, in 1964, Elmaghraby~\cite{e-samfn-64} already studied
how the values of multiterminal flows change if the capacity of an edge in the network varies. Elmaghraby established the \emph{sensitivity analysis of multiterminal flow networks}, which asks for the all-pairs maximum flow values (or all-pairs minimum cut values) in a network considering any possible capacity of the varying edge. According to Barth et al.~\cite{bbdf-rm-06} this can be answered by constructing two cut trees.
In contrast, the \emph{parametric maximum flow problem} 
considers a flow network with only two terminals $s$ and $t$ and with several parametric edge capacities. The goal is to give a maximum $s$-$t$-flow (or minimum $s$-$t$-cut) regarding all possible capacities of the parametric edges.
Parametric maximum flows were studied, e.g., by Gallo et al.~\cite{ggt-fpmfaa-89} and Scutell\`a~\cite{s-a-06}. 

However, in many applications we are 
neither interested in \emph{all-pairs} values nor in one minimum $s$-$t$-cut 
regarding \emph{all possible} changes of varying edges.
Instead we face a concrete change on a concrete edge and need all-pairs minimum cuts
regarding this single change. This is answered by \emph{dynamic cut trees}, which thus
bridge the two sides of sensitivity analysis 
and parametric maximum flows. 
%
\paragraph{Contribution and Outline.}
In this work we develop the first algorithm that efficiently and dynamically maintains a cut tree for a changing graph allowing arbitrary atomic changes. 
To the best of our knowledge no fully-dynamic approach for updating cut trees exists.
Coming from sensitivity analysis, Barth et al.~\cite{bbdf-rm-06} state 
that after the capacity of an edge has increased the path in $T(G)$ between the vertices that define the changing edge in $G$ is the only part of a given cut tree that needs to be recomputed, which is rather obvious. Besides they stress the difficulty for the case of decreasing edge capacities. 

In our work we formulate a general condition for the (re)use of given cuts in an (iterative) cut tree construction, which directly implies the result of Barth et al.
We further solve the case of decreasing edge capacities showing by an experiment that this has a similar potential for saving cut computations like the case of increasing capacities.
In the spirit of Gusfield~\cite{g-vsmap-90}, who simplified the pioneering cut tree algorithm of Gomory and Hu~\cite{gh-mtnf-61}, we also allow the use of crossing cuts and give a representation of intermediate trees (during the iteration) that makes our approach very easy to implement.

We give our notational conventions and a first folklore insight in Sec.~\ref{sec:Prelim}.
In Sec.~\ref{sec:Theo} we revisit the static cut tree algorithm~\cite{gh-mtnf-61} and the key for its simplification~\cite{g-vsmap-90}, and construct a first intermediate cut tree by reusing cuts that obviously remain valid after a change in $G$. We also state several lemmas that imply techniques to find further reusable cuts in this section. Our update approach is described in Sec.~\ref{sec:DynCutTreeAlgo}. In Sec.~\ref{sec:Perform} we finally discuss the performance of our algorithm based on a brief experiment.
%
\paragraph{Preliminaries and Notation.}\label{sec:Prelim}
In this work we consider an undirected, weighted graph $G = (V,E,c)$ with vertices $V$, edges $E$ and a positive edge cost function~$c$, writing $c(u,v)$ as a shorthand for $c(\{u,v\})$ with 
$\{u,v\} \in E$. 
We reserve the term \emph{node} 
for compound vertices of abstracted graphs, which may contain several basic vertices of a concrete graph; however, we identify singleton nodes with the contained vertex without further notice.
\emph{Contracting} a set $N \subseteq V$ in $G$ means replacing $N$ by a single node, and leaving this node adjacent to all former adjacencies $u$ of vertices of $N$, with an edge cost equal to the sum of all former edges between $N$ and $u$.
Analogously we contract a set $M \subseteq E$ or a subgraph of~$G$ by contracting the corresponding vertices.

A \emph{cut} in $G$ is a partition of $V$ into two \emph{cut sides} $S$ and $V\setminus S$. The cost $c(S,V\setminus S)$ of a cut is the sum of the costs of all edges \emph{crossing} the cut, i.e., edges $\{u,v\}$ with $u\in S$, $v \in V\setminus S$. For two disjoint sets $A,B\subseteq V$ we define the cost $c(A,B)$ analogously. Note that a cut is defined by the edges crossing it. Two cuts are \emph{non-crossing} if their cut sides are pairwise nested or disjoint. 
Two vertices $u,v \in V$ are \emph{separated} by a cut if they lie on different cut sides.
A minimum $u$-$v$-cut is a cut that separates $u$ and $v$ and is the cheapest cut among all cuts separating these vertices. We call a cut a \emph{minimum separating cut} if there exists an arbitrary vertex pair $\{u,v\}$ for which it is a minimum $u$-$v$-cut; $\{u,v\}$ is called a \emph{cut pair} of the minimum separating cut.
We further denote the \emph{connectivity} of $\{u,v\}\subseteq V$ by $\lambda(u,v)$, describing the cost of a minimum $u$-$v$-cut.

Since each edge in a tree $T(G)$ on the vertices of $G$ induces a unique cut in $G$,
we identify tree edges with corresponding cuts without further notice. This allows for saying that a vertex is \emph{incident} to a cut and an edge \emph{separates} a pair of vertices.
%
We consider the path $\pi(u,v)$ between $u$ and $v$ in $T(G)$ as the set of edges or the set of vertices on it, as convenient.
%

A change in $G$ either involves an edge $\{b,d\}$ or a vertex $b$. If the cost of $\{b,d\}$ in~$G$ descreases by $\Delta > 0$ or $\{b,d\}$ with $c(b,d) = \Delta>0$ is deleted,
the change yields~$\Gm$. Analogously, inserting $\{b,d\}$ or increasing the cost yields $\Gp$. 
We denote the cost function after a change by  $c^\ominus$ and $c^\oplus$, the connectivity by $\lambda^\ominus$ and $\lambda^\oplus$, respectively. We assume that only degree-$0$ vertices can be deleted from $G$.
Hence, inserting or deleting $b$ changes neither the cost function nor the connectivity.
We start with a fundamental insight on the reusability of cuts. Recall that $T(G) = (V, E_T, c_T)$ denotes a cut tree.
\newcommand{\lemmaPath}{
If $c(b,d)$ changes by $\Delta > 0$, then
	each $\{u,v\} \in E_T$ remains a minimum $u$-$v$-cut (i) in $\Gp$ with cost $\lambda(u,v)$ if $\{u,v\} \notin \pi(b,d)$, 
	(ii) in $\Gm$ with cost $\lambda(u,v)-\Delta$ if $\{u,v\} \in  \pi(b,d)$.
}
\begin{lemma}
\label{lem:path}
	\lemmaPath
\end{lemma}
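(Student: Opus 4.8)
The plan is to reduce both statements to a single structural observation about cut trees and then close each case with a two-sided sandwich on the minimum cut value. The observation I would establish first is the standard correspondence: for an edge $\{u,v\} \in E_T$, the cut induced by deleting $\{u,v\}$ from $T(G)$ separates $b$ and $d$ if and only if $\{u,v\}$ lies on the path $\pi(b,d)$. This holds because deleting $\{u,v\}$ splits $T(G)$ into exactly the two subtrees whose separating edge on the unique $b$-$d$ tree path is $\{u,v\}$; hence $b$ and $d$ fall on opposite sides precisely when $\{u,v\} \in \pi(b,d)$. Equivalently, the graph edge $\{b,d\}$ crosses the induced cut iff $\{u,v\} \in \pi(b,d)$, which determines exactly when the cost of this particular cut is affected by the change on $\{b,d\}$.

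For part (i), suppose $\{u,v\} \notin \pi(b,d)$ and we pass to $\Gp$. By the observation, $b$ and $d$ lie on the same side of the cut induced by $\{u,v\}$, so increasing $c(b,d)$ leaves the cost of this cut unchanged at $\lambda(u,v)$; since it still separates $u$ and $v$, we get $\lambda^\oplus(u,v) \le \lambda(u,v)$. For the reverse inequality I would note that raising a single edge cost can only raise (never lower) the cost of any cut, so every $u$-$v$-cut costs at least as much in $\Gp$ as in $G$, whence $\lambda^\oplus(u,v) \ge \lambda(u,v)$. The two bounds coincide and are attained by the induced cut, proving it remains a minimum $u$-$v$-cut of cost $\lambda(u,v)$.

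Part (ii) is symmetric. Assuming $\{u,v\} \in \pi(b,d)$ and passing to $\Gm$, the observation says $\{b,d\}$ crosses the induced cut, so its cost drops by exactly $\Delta$ to $\lambda(u,v)-\Delta$, giving $\lambda^\ominus(u,v) \le \lambda(u,v)-\Delta$. Conversely, lowering one edge by $\Delta$ can lower the cost of any cut by at most $\Delta$; since every $u$-$v$-cut costs at least $\lambda(u,v)$ in $G$, every such cut costs at least $\lambda(u,v)-\Delta$ in $\Gm$, so $\lambda^\ominus(u,v) \ge \lambda(u,v)-\Delta$. Again the bounds match and are realised by the induced cut.

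I do not expect a serious obstacle here; the entire content is the crossing-versus-path observation, and once that is in place both cases are routine sandwich arguments using the monotonicity of cut costs under a single-edge change. The one point to state carefully is why the induced cut's cost changes by exactly $\Delta$ (not less) in case (ii) and by $0$ in case (i): this follows because $\{b,d\}$ is a single edge contributing its full weight to a cut exactly when it crosses, which is governed entirely by the observation.
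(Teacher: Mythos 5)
Your proof is correct and takes essentially the same route as the paper's: the observation that an edge of $E_T$ induces a cut separating $b$ and $d$ exactly when it lies on $\pi(b,d)$ (so only those cut costs change, and by exactly $\Delta$), combined with the monotonicity sandwich $\lambda^\oplus(u,v)\ge\lambda(u,v)$ for case (i) and $\lambda^\ominus(u,v)\ge\lambda(u,v)-\Delta$ for case (ii). The paper compresses these bounds into ``the connectivity decreases by at most $\Delta$'' and an ``obviously'' for the increase case; your version just makes both directions of the sandwich explicit.
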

\begin{proof}
	The edges on $\pi(b,d)$ are the only edges in $E_T$ that represent cuts
	separating $b$ and $d$. Thus, these edges represent the only cuts with 
	changing costs in $T(G)$. The costs of those edges change exactly by $\Delta$.
	If $c(\{b,d\})$ decreases, let $\{u,v\} \in E_T$ and observe that the connectivity 
	$\lambda(u,v)$ decreases by at most $\Delta$. Hence, $\{u,v\}$ is a 
	minimum $u$-$v$-cut in $\Gm$, since $c_T(u,v) = \lambda(u,v) - \Delta$.
	If the cost of $\{b,d\}$ increases, the cuts whose costs do not change obviously 
	remain minimum separating cuts in $\Gp$.
\end{proof}

\section{The Static Algorithm and Insights on Reusable Cuts}\label{sec:Theo}
\paragraph{The Static Algorithm.}
As a basis for our dynamic approach, we briefly revisit the static construction of a cut tree~\cite{gh-mtnf-61,g-vsmap-90}. 
This algorithm iteratively constructs $n-1$ non-crossing minimum separating cuts for $n-1$ vertex pairs, which we call \emph{step pairs}. These pairs are chosen arbitrarily from the set of pairs not separated by any of the cuts constructed so far. 
Algorithm~\ref{alg:gh} briefly describes the cut tree algorithm of Gomory and Hu.
\begin{algorithm2e}[h]
	\caption{\algo{Cut Tree}}\label{alg:gh}
	\KwIn{Graph $G=(V,E,c)$}
	\KwOut{Cut tree of $G$}
	Initialize tree $T_* := (V_*,E_*,c_*)$ with $V_* \gets \{V\}, E_* \gets \emptyset$ and $c_*$ empty\nllabel{alg:gh:init}\\ 
	\While({\myco*[f]{unfold all nodes}}){$\exists S \in V_*$ with $|S| > 1$}{
		$\{u,v\} \gets$ arbitrary pair from $\binom{S}{2}$ \nllabel{alg:gh:uv}\\
		\lForAll{$S_j$ adjacent to $S$ in $T_*$}{$N_j \gets$ subtree of $S$ in $T_*$ with $S_j \in N_j$} \nllabel{alg:gh:subtrees}\\
		$G_S = (V_S,E_S,c_S) \gets$ in $G$ contract each $N_j$ to $[N_j]$ \myco*[f]{contraction} \nllabel{alg:gh:contract}\\
 		$(U,V \setminus U) \gets$  min-$u$-$v$-cut in $G_S$, cost $\lambda(u,v)$, $u \in U$ \nllabel{alg:gh:minCut}\\
		$S_u \gets S \cap U$ and $S_v \gets S \cap (V_S \setminus U)$ \myco*[f]{split $S = S_u \dcup S_v$}\\
		$V_* \gets (V_* \setminus \{S\}) \cup \{S_u,S_v\}$, $E_* \gets E_* \cup \{\{S_u,S_v\}\}$, $c_*(S_u,S_v) \gets \lambda(u,v)$\\
		\ForAll{former edges $e_j = \{S,S_j\} \in E_*$
			\nllabel{alg:gh:reconnectStart}}{
			\lIf({\myco*[f]{reconnect $S_j$ to $S_u$}}){$[N_j] \in U$}{
				$e_{j} \gets \{S_u,S_j\}$
			}
			\lElse({\myco*[f]{reconnect $S_j$ to $S_v$}}){
				$e_j \gets \{S_v,S_j\}$ \nllabel{alg:gh:reconnectEnd}
			}
		}
	}
	\Return $T_*$
\end{algorithm2e}

The \emph{intermediate} cut tree $T_* = (V_*,E_*,c_*)$
is initialized as an isolated, edgeless node containing all original vertices. 
Then, until each node of~$T_*$ is a singleton node, a node $S \in V_*$ is \emph{split}.
\begin{figure}[tb]
\centering
	\subfigure[If $x \in S_{u}$, $\{x,y\}$ is still a cut pair of $\{S_{u}, S_j\}$]{
		\label{fig:cutPairs_a}
		\includegraphics[width = 7cm, page=1]{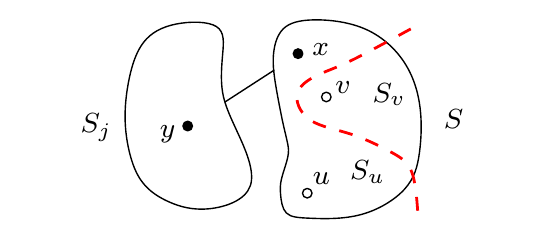}
	}
\hspace{1ex}
	\subfigure[If $x\notin S_{u}$, $\{u,y\}$ is a cut pair of  $\{S_u,S_j\}$]{
		\label{fig:cutPairs_b}
		\includegraphics[width = 7cm, page=2]{cutPairs.pdf}
	}
	\caption{Situation in Lemma~\ref{lem:cut_pairs}. There always exists a cut pair of the edge $\{S_{u}, S_j\}$ in the nodes incident to the edge, independent of the shape of the split cut (dashed).}
	\label{fig:cutPairs}
\end{figure}
To this end, nodes $S' \neq S$ are dealt with by contracting in $G$ whole subtrees $N_j$ of $S$ in $T_*$, connected to $S$ via edges $\{S,S_j\}$, to single nodes $[N_j]$
before cutting, which yields $G_S$.
The split of $S$ into $S_u$ and $S_v$ is then defined by a minimum $u$-$v$-cut (\emph{split cut}) in $G_S$,
which does not cross any of the previously used cuts due to the contraction technique. Afterwards, each $N_j$ is reconnected, again by $S_j$,
to either $S_u$ or $S_v$ depending on which side of the cut $[N_j]$ ended up.
Note that this cut in $G_S$ can be proven to induce a minimum $u$-$v$-cut in $G$.

The correctness of \textsc{Cut Tree} is guaranteed by Lemma~\ref{lem:cut_pairs}, 
which takes care for the \emph{cut pairs} of the reconnected edges.
It states that each edge $\{S,S'\}$ in~$T_*$ has a cut pair $\{x,y\}$ with $x\in S$, $y\in S'$. An intermediate cut tree satisfying this condition is \emph{valid}. The assertion is not obvious, since the nodes incident to the edges in~$T_*$ change whenever the edges are reconnected. Nevertheless, each edge in the final cut tree represents a minimum separating cut of its incident vertices, due to Lemma~\ref{lem:cut_pairs}.
The lemma was formulated and proven in~\cite{gh-mtnf-61} and rephrased in~\cite{g-vsmap-90}. See Figure~\ref{fig:cutPairs}.
\begin{lemma}[Gus.~\cite{g-vsmap-90}, Lem.~4] 
\label{lem:cut_pairs} 
	Let $\{S,S_j\}$ be an edge in $T_*$ inducing a cut with cut pair $\{x,y\}$, w.l.o.g.\ $x\in S$.
	Consider step pair $\{u,v\} \subseteq S$ that splits $S$ into $S_u$ and $S_v$, w.l.o.g.\ $S_j$ and $S_u$ ending up on the same cut side, i.e.\ $\{S_{u}, S_j\}$ becomes a new edge in $T_*$.
	If $x\in S_{u}$, $\{x,y\}$ remains a cut pair for $\{S_{u}, S_j\}$.
	If $x\in S_{v}$, $\{u,y\}$ is also a cut pair of $\{S_{u}, S_j\}$.
\end{lemma}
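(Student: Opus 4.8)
The plan is to first reduce the statement to a single claim about a \emph{fixed} cut. I would observe that the cut associated with the new edge $\{S_u,S_j\}$ is, as a bipartition of $V$, identical to the cut associated with the old edge $\{S,S_j\}$: reconnecting $S_j$ to $S_u$ does not alter the subtree hanging off $S_j$, so deleting $\{S_u,S_j\}$ from $T_*$ still separates exactly the vertex set $N_j$ (which contains $y$) from its complement (which contains all of $S$, hence $x$, $u$ and $v$). Denote this fixed cut by $(N_j,V\setminus N_j)$; by hypothesis it is a minimum $x$-$y$-cut, so its cost equals $\lambda(x,y)$. Proving the lemma therefore amounts to exhibiting a cut pair of $(N_j,V\setminus N_j)$ with one endpoint in $S_u$ and the other in $S_j$. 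The case $x\in S_u$ is then immediate: the cut is unchanged and already minimum for $\{x,y\}$, while now $x\in S_u$ and $y\in S_j$, so $\{x,y\}$ itself is the desired cut pair.

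For the case $x\in S_v$ I would first record the positions of $u,v,x,y$ relative to both $(N_j,V\setminus N_j)$ and the split cut $(U,V\setminus U)$. Since the split cut is computed in $G_S$ with $N_j$ contracted and $S_j$ lands on $S_u$'s side, all of $N_j$ (hence $y$) lies in $U$, whereas $v$ and $x\in S_v$ lie in $V\setminus U$; moreover $u\in U$. In particular the split cut separates $x$ from $y$, so $\lambda(u,v)=\kappa(U)\ge\lambda(x,y)$, where I write $\kappa(X):=c(X,V\setminus X)$ for the cut value. Since $(N_j,V\setminus N_j)$ separates $u$ (lying in $S\subseteq V\setminus N_j$) from $y$, it is in particular a $u$-$y$-cut, which gives the easy inequality $\lambda(u,y)\le\lambda(x,y)$. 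It remains to prove the reverse inequality $\lambda(u,y)\ge\lambda(x,y)$, which then forces $(N_j,V\setminus N_j)$ to be a minimum $u$-$y$-cut and makes $\{u,y\}$ its cut pair.

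The heart of the argument, and the step I expect to be the main obstacle, is this reverse inequality. I would take an arbitrary minimum $u$-$y$-cut $(A,V\setminus A)$ with $u\in A$ and $y\notin A$ and show $\kappa(A)\ge\lambda(x,y)$. If $x\in A$, then the cut already separates $x$ and $y$, so $\kappa(A)\ge\lambda(x,y)$ at once. Otherwise $x$ and $y$ lie together outside $A$, and I would uncross $A$ with $U$ using submodularity of the cut function, $\kappa(A)+\kappa(U)\ge\kappa(A\cap U)+\kappa(A\cup U)$. The key point is the bookkeeping of memberships: $A\cap U$ contains $u$ but not $v$, so it separates $u$ from $v$ and hence $\kappa(A\cap U)\ge\lambda(u,v)=\kappa(U)$; while $A\cup U$ contains $y$ but not $x$, so it separates $y$ from $x$ and hence $\kappa(A\cup U)\ge\lambda(x,y)$. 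Cancelling the common term $\kappa(U)$ leaves $\kappa(A)\ge\lambda(x,y)$, as desired. Combined with $\lambda(u,y)\le\lambda(x,y)$ from the previous paragraph this gives $\lambda(u,y)=\lambda(x,y)$, so $(N_j,V\setminus N_j)$ is a minimum $u$-$y$-cut and $\{u,y\}$ is a cut pair of $\{S_u,S_j\}$. The delicate part is precisely verifying the four memberships, each of which follows directly from the vertex positions recorded in the second step.
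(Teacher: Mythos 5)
Your argument is correct, but be aware that the paper contains no proof of this lemma to compare against: it is imported verbatim from the literature (``formulated and proven in~\cite{gh-mtnf-61} and rephrased in~\cite{g-vsmap-90}''). Measured against those classical proofs, your route is essentially the standard one. Your opening reduction is exactly right: reconnecting $S_j$ to $S_u$ does not change the bipartition $(N_j, V\setminus N_j)$ induced by the edge, so the whole lemma is about exhibiting a cut pair of this \emph{fixed} cut inside the new incident nodes; the case $x\in S_u$ is then trivial because $G$ is unchanged. For $x\in S_v$, your uncrossing of an arbitrary minimum $u$-$y$-cut $(A,V\setminus A)$ with the split cut via submodularity, together with the four membership checks ($u\in A\cap U$, $v\notin A\cap U$, $y\in A\cup U$, $x\notin A\cup U$, all of which you verify correctly), is the Gomory--Hu argument; Gusfield instead routes the same uncrossing through his Lemma~1 (the paper's Lemma~\ref{lem:shelteredByPrev}), which is submodularity in sheltering clothing, so the two derivations are equivalent in substance.

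One dependency deserves to be made explicit. Cancelling $c(U,V\setminus U)$ in your chain requires $c(U,V\setminus U)=\lambda(u,v)$ \emph{in $G$}, i.e.\ that the split cut computed in the contracted graph $G_S$ is a minimum $u$-$v$-cut of $G$ itself; with only $c(U,V\setminus U)\ge\lambda(u,v)$ the inequality $c(A\cap U, V\setminus(A\cap U))\ge\lambda(u,v)$ would not cancel the term $c(U,V\setminus U)$. The paper merely asserts this fact (``can be proven to induce a minimum $u$-$v$-cut in $G$''), and its standard proof applies Lemma~\ref{lem:shelteredByPrev} repeatedly to the cuts of the current tree, which presupposes the validity (cut-pair property) of $T_*$ at that step---the very property your lemma maintains. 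So your proof should be stated as one half of a simultaneous induction, taking the $G$-minimality of the split cut as an inductive hypothesis; as written, quoting it as a free-standing fact risks apparent circularity. (In Gusfield's contraction-free variant the issue vanishes, since there the split cut is by definition a minimum $u$-$v$-cut in $G$.) With that caveat spelled out, the proof is complete.
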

While Gomory and Hu use contractions in $G$ to prevent crossings of the cuts, as a simplification, Gusfield introduced the following lemma showing that contractions are not necessary, since any arbitrary minimum separating cut can be bent along the previous cuts resolving any potential crossings. See~Figure~\ref{fig:shelteredByPrev}.
\begin{lemma}[Gus.~\cite{g-vsmap-90}, Lem.~1]\label{lem:shelteredByPrev}
	Let $(X,V \setminus X)$ be a minimum $x$-$y$-cut in $G$, with $x \in X$.
	Let $(H, V \setminus H)$ be a minimum $u$-$v$-cut, with $u,v \in V \setminus X$ and $x \in H$.
	Then the cut $(H \cup X, (V \setminus H)\cap(V \setminus X))$ is also a minimum $u$-$v$-cut.
\end{lemma}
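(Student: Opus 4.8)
The plan is to run the standard uncrossing argument built on the \emph{submodularity} of the cut function. Write $f(S) := c(S, V \setminus S)$ for the cost of the cut whose one side is $S$; for an undirected graph with positive edge costs this function is submodular, i.e.\ $f(A) + f(B) \geq f(A \cup B) + f(A \cap B)$ for all $A,B \subseteq V$. The whole proof then amounts to applying this inequality to $A = H$ and $B = X$ and bounding the two ``corner'' terms using the two minimality hypotheses.

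First I would check that the proposed cut genuinely separates $u$ and $v$. Since $(H, V \setminus H)$ separates $u,v \in V \setminus X$, assume w.l.o.g.\ $u \in H$ and $v \in V \setminus H$. Then $u \in H \subseteq H \cup X$, while $v \in V \setminus H$ and $v \in V \setminus X$, so $v \in (V \setminus H) \cap (V \setminus X)$. Hence the new cut separates $u$ and $v$, and because $(H, V \setminus H)$ is a \emph{minimum} $u$-$v$-cut we already have $f(H \cup X) \geq \lambda(u,v) = f(H)$. It therefore remains to establish the reverse inequality $f(H \cup X) \leq f(H)$.

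Applying submodularity to $H$ and $X$ gives
\[
	f(H) + f(X) \geq f(H \cup X) + f(H \cap X),
\]
so $f(H \cup X) \leq f(H) + f(X) - f(H \cap X)$, and it suffices to show $f(H \cap X) \geq f(X)$. The key observation is that the side $H \cap X$ itself separates $x$ and $y$: from $x \in X$ and $x \in H$ we get $x \in H \cap X$, while $y \in V \setminus X$ (because $(X, V \setminus X)$ separates $x$ and $y$ with $x \in X$) gives $y \notin H \cap X$. Since $(X, V \setminus X)$ is a \emph{minimum} $x$-$y$-cut, every $x$-$y$-separating cut, and in particular the one induced by $H \cap X$, costs at least $f(X) = \lambda(x,y)$; that is exactly $f(H \cap X) \geq f(X)$.

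Combining the pieces yields $f(H \cup X) \leq f(H) + f(X) - f(H \cap X) \leq f(H) = \lambda(u,v)$, and together with $f(H \cup X) \geq \lambda(u,v)$ this forces $f(H \cup X) = \lambda(u,v)$, so $(H \cup X, (V \setminus H) \cap (V \setminus X))$ is indeed a minimum $u$-$v$-cut. The only genuinely delicate step is recognizing that the ``lower corner'' $H \cap X$ of the submodularity inequality is again an $x$-$y$-separating cut, which is what lets the minimality of $(X,V\setminus X)$ be invoked to cancel the $f(X)$ term; everything else is bookkeeping about which side each of $u,v,x,y$ lands on.
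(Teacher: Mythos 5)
Your proof is correct. The paper itself states this lemma without proof (it is imported verbatim from Gusfield~\cite{g-vsmap-90}), and your argument is exactly the classical uncrossing one behind it: submodularity $f(H)+f(X)\geq f(H\cup X)+f(H\cap X)$, the observation that $H\cap X$ still separates $x$ and $y$ so that minimality of $(X,V\setminus X)$ cancels the $f(X)$ term, and the check that $H\cup X$ still separates $u$ and $v$, which supplies the matching lower bound. Nothing is missing; the w.l.o.g.\ placement $u\in H$ is legitimate by the $u$-$v$ symmetry of the statement, and both $H\cap X$ and $H\cup X$ are proper nonempty vertex sets ($x$ lies in the former, $v$ outside the latter), so all quantities invoked are genuine cuts.
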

\begin{figure}[t]
\centering
	\includegraphics[width = 6cm]{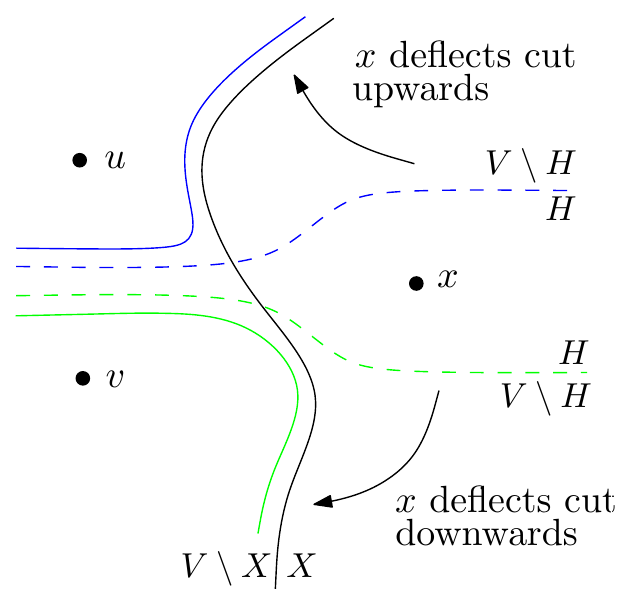}
	\caption{Depending on $x$ Lem.~\ref{lem:shelteredByPrev} bends the cut $(H, V \setminus H)$ upwards or downwards.}
	\label{fig:shelteredByPrev}
\end{figure}

%

%
We say that $(X,V\setminus X)$ \emph{shelters} $X$, meaning that each minimum $u$-$v$-cut with $u,v \notin X$ can be reshaped, such that it does no longer split $X$.
%
\paragraph{Representation of Intermediate Trees.}
In the remainder of this work we represent each node in $T_*$, which consists of original vertices in $G$, by an arbitrary tree of \emph{thin} edges connecting the contained vertices in order to indicate their membership to the node. An edge connecting two nodes in $T_*$ is represented by a \emph{fat} edge, which we connect to an arbitrary vertex in each incident compound node. 
Fat edges represent minimum separating cuts in $G$.
If a node contains only one vertex, we color this vertex black. Black vertices are only incident to fat edges. The vertices in non-singleton nodes are colored white. White  vertices are incident to at least one thin edge.
In this way, $T_*$ becomes a tree on $V$ with two types of edges and vertices. For an example see Figure~\ref{fig:stratingPoints}.
%
\paragraph{Conditions for Reusing Cuts.}
Consider a set $K$ of $k \leq n-1$ cuts in $G$ for example given by a previous cut tree in a dynamic scenario. The following
theorem states sufficient conditions for $K$, such that there exists a valid intermediate cut tree that represents exactly the cuts in $K$. Such a tree can then be further processed to a proper tree by \textsc{Cut Tree}, saving at least $|K|$ cut computations compared to a construction from scratch.
\begin{theorem}
\label{the:tool}
 Let $K$ denote a set of non-crossing minimum separating cuts in $G$ and let~$F$ denote a set of associated cut pairs such that each cut in $K$ separates exactly one pair in~$F$. Then there exists a valid intermediate cut tree representing exactly the cuts in~$K$.
\end{theorem}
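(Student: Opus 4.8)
The plan is to realize the tree directly from the laminar structure underlying the non-crossing family $K$, and then to read off the cut pairs from $F$, using the ``exactly one pair'' condition to guarantee that they land in the right nodes.

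First I would fix an arbitrary root vertex $r\in V$ and orient every cut $c_i\in K$ as $(S_i, V\setminus S_i)$ with $r\in V\setminus S_i$. Because any two cuts of $K$ are non-crossing, at least one of the four pairwise intersections of their sides is empty; since $r$ lies in $(V\setminus S_i)\cap(V\setminus S_j)$, the empty one is among $S_i\cap S_j$, $S_i\cap(V\setminus S_j)$, $(V\setminus S_i)\cap S_j$, which makes $S_i$ and $S_j$ either disjoint or nested. Hence $\mathcal L:=\{S_1,\dots,S_k\}$ is laminar, and adjoining $V$ as a universal top element turns $\mathcal L\cup\{V\}$ into a containment forest rooted at $V$. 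For each $L\in\mathcal L\cup\{V\}$ I define its node to be the atom $[L]:=L\setminus\bigcup\{L'\in\mathcal L : L'\subsetneq L \text{ maximal}\}$; the atoms partition $V$ and become the compound nodes of $T_*$. I then connect, for every $S_i\in\mathcal L$, the node $[S_i]$ to the node $[P_i]$ of its parent $P_i$, label this fat edge with $c_i$, and set its cost to $c(S_i, V\setminus S_i)$. Removing this edge separates exactly the atoms contained in $S_i$ (whose union is $S_i$) from the remaining atoms, so it induces the cut $(S_i, V\setminus S_i)$. Thus $T_*$ is a tree on $V$ with exactly $|K|$ fat edges representing exactly the cuts of $K$, each with the correct minimum-cut cost.

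It remains to verify \emph{validity}, i.e.\ that every fat edge carries a cut pair in its two incident nodes; this is the crux. Here I would first upgrade the hypothesis to a bijection: the map $\phi$ sending each cut to the unique pair of $F$ it separates is well defined by assumption and surjective onto $F$ (each pair is separated by the cut it is associated with), so with $|F|=|K|$ a counting argument forces $\phi$ to be a bijection, whence \emph{every pair of $F$ is separated by exactly one cut of $K$}. Now fix an edge $c_i$ with associated pair $f_i=\{x_i,y_i\}$, say $x_i\in S_i$ and $y_i\in V\setminus S_i$. If $x_i$ lay in some child $S'\subsetneq S_i$, then $c'$ would also separate $f_i$ (since $y_i\notin S_i\supseteq S'$); if $y_i$ lay outside $P_i$ (possible only when $P_i\neq V$), then $c_{P_i}$ would separate $f_i$ (since $x_i\in S_i\subseteq P_i$); and if $y_i$ lay in a sibling child $S''\neq S_i$ of $P_i$, then $c''$ would separate $f_i$. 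Each case exhibits a second cut separating $f_i$, contradicting injectivity of $\phi$. Hence $x_i\in[S_i]$ and $y_i\in[P_i]$, so $f_i$ is a cut pair sitting in the two nodes incident to $c_i$, and since this holds for every edge, $T_*$ is valid.

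The main obstacle is precisely this localization step: a priori the pair $f_i$ only satisfies $x_i\in S_i$ and $y_i\in V\setminus S_i$, which is far weaker than lying in the specific incident atoms, and the only thing preventing it from slipping into a smaller nested set is the bijection between cuts and pairs forced by the ``exactly one pair'' condition. I would therefore spend the bulk of the write-up on the counting argument that yields injectivity of $\phi$ and on the laminar case analysis above. As an alternative to the direct construction, the same facts permit simulating \algo{Cut Tree}: process the sets $S_i$ from largest to smallest, each splitting a single current node along $f_i$, with Lemma~\ref{lem:cut_pairs} maintaining validity under the reconnections; but the direct laminar construction seems cleaner and avoids reasoning about contracted graphs.
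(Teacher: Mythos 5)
Your proof is correct, and it takes a genuinely different route from the paper's. The paper proves Theorem~\ref{the:tool} by \emph{simulation}: it runs \algo{Cut Tree} using the pairs of $F$ as step pairs and the associated cuts of $K$ as split cuts, and argues that since the cuts are non-crossing and each separates exactly one pair of $F$, no split ever separates a pair not yet processed; the validity of the resulting intermediate tree is then inherited inductively from the correctness of the algorithm (i.e., from Lemma~\ref{lem:cut_pairs}). You instead build the tree in one shot from the laminar family obtained by orienting all cuts away from a root vertex, and verify validity directly: your three-case analysis (the pair slipping into a child, escaping the parent, or landing in a sibling) shows exactly how the ``exactly one pair'' hypothesis pins each associated pair into the two atoms incident to its fat edge. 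Your approach buys an algorithm-free, self-contained argument that makes the role of the hypothesis explicit --- the paper's proof hides the localization of the cut pairs inside the invariant maintained by \algo{Cut Tree} --- at the cost of more bookkeeping; you even note the simulation alternative at the end, which is precisely the paper's proof.

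Two points deserve care in a full write-up. First, your ``upgrade to a bijection'' asserts $|F|=|K|$, which is not derivable from the statement under a literal set reading: if two distinct nested minimum $x$-$y$-cuts are both associated with the same pair $\{x,y\}$, then $|F|<|K|$ while each cut still separates exactly one pair of $F$ --- and the conclusion can then fail (take vertices $x,z,y$ with $c(x,z)=c(z,y)=1$ and $c(x,y)=10$: both $(\{x\},\{z,y\})$ and $(\{x,z\},\{y\})$ are minimum $x$-$y$-cuts of cost $11$, yet $\lambda(x,z)=\lambda(z,y)=2$, so no valid intermediate tree can represent both). So injectivity of the cut-to-pair association is part of the intended hypothesis (one distinct pair per cut, as in the paper's application to tree edges), not a consequence of counting; the paper's proof makes the same implicit assumption when it matches step pairs to split cuts one-to-one, and your flagging of the issue is a virtue, but the counting step should be replaced by stating the bijective association outright. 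Second, you use the atoms as nodes of $T_*$ before knowing they are nonempty; an empty $[S_i]$ would make the construction ill-formed. Your localization argument repairs this a posteriori ($x_i\in[S_i]$ for every $S_i$, and the root $r\in[V]$), but the presentation should establish, or at least remark on, nonemptiness before declaring $T_*$ a valid intermediate cut tree.
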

\begin{proof}
Theorem~\ref{the:tool} follows inductively from the correctness of \algo{Cut Tree}.
Consider a run of \textsc{Cut Tree} that uses the elements in $F$ as step pairs in an arbitrary order and the associated cuts in $K$ as split cuts.
Since the cuts in $K$ are non-crossing each separating exactly one cut pair in $F$, splitting a node neither causes reconnections nor the separation of a pair that was not yet considered.  Thus, \textsc{Cut Tree} reaches an intermediate tree representing the cuts in $K$ with the cut pairs located in the incident nodes.
\end{proof}
With the help of Theorem~\ref{the:tool} we can now construct a valid intermediate cut tree from the cuts that remain valid after a change of $G$ according to Lemma~\ref{lem:path}. 
These cuts are non-crossing as they are represented by tree edges, and the vertices incident to these edges constitute a set of cut pairs as required by Theorem~\ref{the:tool}.  
The resulting tree for an inserted edge or an increased edge cost is shown in Figure~\ref{fig:point_add}. In this case, all but the edges on $\pi(b,d)$ can be reused. Hence, we draw these edges fat. The remaining edges are thinly drawn.
The vertices are colored according to the compound nodes indicated by the thickness of the edges. 
Vertices incident to a fat edge correspond to a cut pair.

For a deleted edge or a decreased edge cost, the edges on $\pi(b,d)$ are fat, while the edges that do not lie on $\pi(b,d)$ are thin (cp.~Figure~\ref{fig:point_del}). Furthermore, the costs of the fat edges decrease by $\Delta$, since they all cross the changing edge $\{b,d\}$ in $G$.
Compared to a construction from scratch, starting the \textsc{Cut Tree} routine from these intermediate trees already saves $n-1-|\pi(b,d)|$ cut computations in the first case
and $|\pi(b,d)|$ cut computations in the second case, where $|\pi(b,d)|$ counts the edges on $\pi(b,d)$.
Hence, in scenarios with only little varying path lengths and a balanced number of increasing and decreasing costs, we can already save about half of the cut computations.
We further remark that the result of Barth et.~al.~\cite{bbdf-rm-06}, who costly prove the existence of the intermediate cut tree in Figure~\ref{fig:point_add}, easily follows by Theorem~\ref{the:tool} applied to the cuts in Lemma~\ref{lem:path} as seen above.
In the following we want to use even more information from the previous cut tree $T(G)$ when executing \textsc{Cut Tree} unfolding the intermediate tree to a proper cut tree of $(n-1)$ fat edges.
\begin{figure}[t]
\centering
	\subfigure[Intermediate cut tree for $\Gp$.]{
		\label{fig:point_add}
		\includegraphics[width = 7cm, page=2]{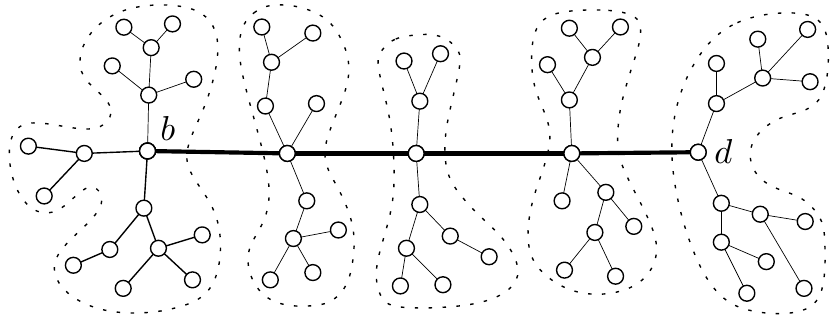}
	}
\hspace{2ex}
	\subfigure[Intermediate cut tree for $\Gm$.]{
		\label{fig:point_del}
		\includegraphics[width = 7cm, page=1]{firstTree.pdf}
	}
	\caption{Intermediate cut trees in dynamic scenarios. Fat edges represent valid minimum cuts, thin edges indicate compound nodes. Contracting the thin edges yields nodes of white vertices (indicated by dotted lines). Black vertices correspond to singletons.}
	\label{fig:stratingPoints}
\end{figure}
The next section lists further lemmas that allow the reuse of cuts already given by~$T(G)$.
%
\paragraph{Further Reusable Cuts.} 
In this section we focus on the reuse of those cuts that are still represented by thin edges in Figure~\ref{fig:stratingPoints}. 
If $\{b,d\}$ is inserted or the cost increases, the following corollary obviously holds, since $\{b,d\}$ crosses each minimum $b$-$d$-cut.
%
%
\newcommand{\lemmaReuseEdgeIns}{If $\{b,d\}$ is newly inserted with 
$c^\oplus(b,d) = \Delta$ or $c(b,d)$ increases by $\Delta$, any minimum $b$-$d$-cut in $G$ 
remains valid in $\Gp$ with $\lambda^\oplus(b,d) = \lambda(b,d) + \Delta$.}
\begin{corollary}
\label{cor:reuseEdge_Ins}
	\lemmaReuseEdgeIns
\end{corollary}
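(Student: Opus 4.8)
The plan is to show that any minimum $b$-$d$-cut $(S,V\setminus S)$ in the original graph $G$ is still a minimum $b$-$d$-cut in $\Gp$, and that its new cost is exactly $\lambda(b,d)+\Delta$. The key structural fact is that \emph{every} cut separating $b$ and $d$ must cross the edge $\{b,d\}$, because $b$ and $d$ lie on opposite cut sides and $\{b,d\}$ is an edge between them. Consequently, increasing $c(b,d)$ (or inserting $\{b,d\}$ afresh) raises the cost of \emph{every} $b$-$d$-separating cut by exactly the same amount~$\Delta$, and leaves the cost of all other cuts untouched.

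First I would fix an arbitrary minimum $b$-$d$-cut $(S,V\setminus S)$ in $G$, so that $c(S,V\setminus S)=\lambda(b,d)$ with $b$ and $d$ on opposite sides. Since this cut separates $b$ and $d$, the edge $\{b,d\}$ crosses it, hence its new cost satisfies $c^\oplus(S,V\setminus S)=c(S,V\setminus S)+\Delta=\lambda(b,d)+\Delta$. Next I would argue the lower bound: take any competing $b$-$d$-cut $(S',V\setminus S')$. It too separates $b$ and $d$, so $\{b,d\}$ crosses it as well, giving $c^\oplus(S',V\setminus S')=c(S',V\setminus S')+\Delta\geq\lambda(b,d)+\Delta$, where the inequality is just the definition of $\lambda(b,d)$ as the minimum cut cost in $G$. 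Combining the two observations, $(S,V\setminus S)$ attains the minimum over all $b$-$d$-cuts in $\Gp$, so it remains a minimum $b$-$d$-cut, and $\lambda^\oplus(b,d)=\lambda(b,d)+\Delta$ follows immediately.

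The argument is almost entirely bookkeeping, so I do not anticipate a genuine obstacle; the only point deserving care is the uniform shift. One must note that the map $(S',V\setminus S')\mapsto c^\oplus(S',V\setminus S')$ is obtained from $c(S',V\setminus S')$ by adding $\Delta$ on precisely the family of $b$-$d$-separating cuts and nothing else, so the relative ordering among \emph{these} cuts is preserved. This is exactly why a minimizer in $G$ stays a minimizer in $\Gp$: adding the same constant to every candidate does not change which one is smallest. The insertion case is handled identically, treating an absent edge as one of cost~$0$ whose cost rises to $\Delta$; no separate computation is needed.
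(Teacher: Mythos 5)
Your proof is correct and matches the paper's reasoning: the paper dismisses the corollary as obvious precisely because $\{b,d\}$ crosses every $b$-$d$-separating cut, so all such cuts shift uniformly by $\Delta$ while no other cut changes, which is exactly the argument you spell out. Your added care about the uniform shift preserving the minimizer, and treating an inserted edge as one of cost $0$, are fine elaborations of the same idea.
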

\noindent
Note that reusing a valid minimum $b$-$d$-cut as split cut in \textsc{Cut Tree} separates $b$ and $d$ such that $\{b,d\}$ cannot be used again as step pair in a later iteration step. This is, we can reuse only one minimum $b$-$d$-cut, even if there are several such cuts represented in~$T(G)$.
Together with the following corollary, Corollary~\ref{lem:reuseEdge_Ins} directly allows the reuse of the whole cut tree~$T(G)$ if $\{b,d\}$ is an existing bridge in $G$ (with increasing cost). 
%
%
\newcommand{\corollaryBridgeDetec}{An edge $\{u,v\}$ is a bridge in $G$ iff $ c(u,v) = \lambda(u,v) > 0$. Then $\{u,v\}$ is also an edge in $T(G)$ representing the cut that is given by the two sides of the bridge.}
\begin{corollary}
\label{cor:bridgeDetec}
\corollaryBridgeDetec
\end{corollary}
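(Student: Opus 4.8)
The statement has two parts, and I would dispatch the equivalence first and then spend the real effort on the structural claim.

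For the equivalence $\{u,v\}$ is a bridge $\iff c(u,v)=\lambda(u,v)>0$, I would argue both directions directly. Since $\{u,v\}\in E$ crosses every cut that separates $u$ and $v$, we always have $\lambda(u,v)\ge c(u,v)$. If $\{u,v\}$ is a bridge, removing it splits $V$ into its two sides $B_u\ni u$ and $B_v\ni v$, and this cut is crossed only by $\{u,v\}$; hence it is a $u$-$v$-cut of cost $c(u,v)$, giving $\lambda(u,v)\le c(u,v)$ and therefore equality (positivity being inherited from the positive cost function). Conversely, if $\{u,v\}$ is not a bridge, there is a $u$-$v$-path avoiding $\{u,v\}$, and this path must cross any minimum $u$-$v$-cut $(S,V\setminus S)$ at some edge $e\neq\{u,v\}$; then the cut costs at least $c(u,v)+c(e)>c(u,v)$, forcing $\lambda(u,v)>c(u,v)$. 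This contraposition closes the equivalence.

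For the structural claim I would use the defining cut tree property stated in the introduction: the cheapest edge $e^*$ on $\pi(u,v)$ induces a minimum $u$-$v$-cut, and its cost equals $\lambda(u,v)=c(u,v)$. A $u$-$v$-cut of cost exactly $c(u,v)$ can be crossed only by $\{u,v\}$ itself, so (taking $G$ connected; otherwise we restrict to the component carrying the bridge, the remaining components being attached by cost-$0$ edges) the cut induced by $e^*$ is exactly $B_u\mid B_v$. Since the cut induced by a tree edge is the bipartition into its two tree components, removing $e^*$ from $T(G)$ yields precisely $B_u$ and $B_v$; thus $e^*=\{p,q\}$ with $p\in B_u$, $q\in B_v$ is the unique tree edge joining the two sides, and each of $B_u$, $B_v$ spans a connected subtree of $T(G)$.

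The main obstacle is to upgrade this to $p=u$ and $q=v$, i.e.\ that the bridge endpoints are themselves adjacent. I would argue by contradiction: assume $p\neq u$ and consider the tree edge $e_p=\{p,p'\}$ incident to $p$ on the subpath toward $u$, which stays inside $B_u$, so $p'\in B_u$. Removing $e_p$ separates $u$ (with part of $B_u$) from $p$; but because $B_v$ hangs off the $p$-side through $e^*$, all of $B_v$, in particular $v$, lands on $p$'s side, while $u$ lands on $p'$'s side. Hence the minimum $p$-$p'$-cut induced by $e_p$ is crossed by the bridge $\{u,v\}$. Relocating the whole block $B_v$ to $u$'s side removes exactly this crossing and introduces no new one, since $\{u,v\}$ is the only edge leaving $B_v$; this is precisely a sheltering step in the sense of Lemma~\ref{lem:shelteredByPrev}. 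The result is a $p$-$p'$-cut of cost $c_T(e_p)-c(u,v)<c_T(e_p)$, contradicting that $e_p$ realizes $\lambda(p,p')$. Therefore $p=u$, and symmetrically $q=v$, so $e^*=\{u,v\}$ is an edge of $T(G)$ of cost $\lambda(u,v)$ representing the bridge cut. The delicate point throughout is bookkeeping which tree component carries $B_v$, so that the relocation argument strips off exactly the bridge's contribution and nothing else.
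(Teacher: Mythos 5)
Your proof is correct, and it does substantially more than the paper, via a different route. The paper disposes of the corollary in one sentence: a bridge induces a minimum separating cut that separates all cross-side pairs and crosses no minimum separating cut of a same-side pair, so (implicitly, via the non-crossing machinery behind Theorem~\ref{the:tool}) it is representable by a tree edge. Read literally, the paper's phrase ``a minimum separating cut for all vertices on different bridge sides'' is even inaccurate --- a cheaply attached pendant vertex in $B_u$ has a cheaper cut to any vertex of $B_v$ --- and the sketch at best shows that \emph{some} cut tree contains the bridge edge. You instead prove the statement in the form in which the paper actually uses it (e.g.\ in the proof of Lemma~\ref{lem:reuseTree_Del}, where $\pi(b,d)=\{b,d\}$ is asserted for the \emph{given} tree): in an \emph{arbitrary} cut tree, the cheapest edge $e^*$ on $\pi(u,v)$ has cost $\lambda(u,v)=c(u,v)$, positivity of the edge costs forces its induced cut, restricted to the bridge's component, to be exactly $(B_u,B_v)$, and your exchange argument --- relocating the whole block $B_v$ across the cut induced by $e_p$, which strictly saves $c(u,v)$ because the bridge is the only edge leaving $B_v$ --- upgrades the endpoints to $p=u$ and $q=v$. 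This buys a genuinely stronger and fully rigorous conclusion (every cut tree contains the bridge edge) at the price of the endpoint bookkeeping, which you carry out correctly.

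Two small repairs. First, the relocation of $B_v$ is \emph{not} ``a sheltering step in the sense of Lemma~\ref{lem:shelteredByPrev}'': that lemma reshapes minimum cuts at \emph{equal} cost, whereas your move strictly decreases the cost by $c(u,v)$ --- which is precisely your contradiction --- so the citation is a misattribution; the correct justification is the one you give yourself (the bridge is the only edge leaving $B_v$), and the reference should simply be dropped. Second, in the disconnected case you need not ``restrict'' the tree at all: since every edge of $\pi(u,v)$ has cost at least $\lambda(u,v)=c(u,v)>0$, all vertices of $\pi(u,v)$ --- in particular $p$ and $p'$ --- lie in the bridge's component, and the exchange argument runs in $G$ unchanged, with the other components riding along on either cut side at zero cost since no edges leave a component.
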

%
\noindent
While the first part of Corollary~\ref{cor:bridgeDetec} is obvious, the second part follows by the fact that a bridge induces a minimum separating cut for all vertices on different bridge sides, while it does not cross any minimum separating cut for vertices on a common side.
If $G$ is disconnected and $\{b,d\}$ is a new bridge in $\Gp$, reusing the whole tree is also possible by replacing a single edge. Such bridges can be easily detected having the cut tree $T(G)$ at hand, since $\{b,d\}$ is a new bridge if and only if $\lambda(b,d) = 0$. New bridges particularly occur if newly inserted vertices are connected for the first time.
\newcommand{\lemmaReuseTreeIns}{
	Let $\{b,d\}$ be a new bridge in $\Gp$. 
	Then replacing an edge of cost~$0$ by~$\{b,d\}$ with cost $c^\oplus(b,d)$ 
	on $\pi(b,d)$ in $T(G)$ yields a new cut tree $T(\Gp)$. 
}
\begin{lemma}
\label{lem:reuseTree_Ins}
	\lemmaReuseTreeIns
\end{lemma}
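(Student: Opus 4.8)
The plan is to verify that after replacing the appropriate cost-$0$ edge on $\pi(b,d)$ by the new bridge $\{b,d\}$, every edge of the resulting tree still induces a minimum separating cut of correct cost in $\Gp$. The statement decomposes into two disjoint groups of edges: those \emph{not} on $\pi(b,d)$, and the single edge that is swapped in. First I would dispose of the edges off the path using Lemma~\ref{lem:path}(i): since $\{b,d\}$ is a \emph{new} bridge, $\lambda(b,d)=0$ in $G$, so $G$ is disconnected with $b$ and $d$ in different components, and inserting $\{b,d\}$ is an edge cost increase by $\Delta=c^\oplus(b,d)$. Every tree edge $\{x,y\}\notin\pi(b,d)$ therefore remains a minimum $x$-$y$-cut in $\Gp$ with unchanged cost $\lambda(x,y)$. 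Thus only the path edges need attention.

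Next I would examine the path $\pi(b,d)$ itself. Because $\lambda(b,d)=0$, there is a minimum $b$-$d$-cut of cost $0$ in $G$, namely the split between the two components. Consequently the cheapest edge on $\pi(b,d)$ has cost $0$, and in fact $\pi(b,d)$ must contain at least one cost-$0$ edge (otherwise $b$ and $d$ would be connected with positive connectivity, contradicting $\lambda(b,d)=0$). This cost-$0$ edge is precisely the one the lemma instructs us to delete. I would argue that the two sides of this edge are exactly the two connected components containing $b$ and $d$ respectively — more carefully, the cost-$0$ cut it induces separates $G$ into two parts of zero inter-cost, and by Corollary~\ref{cor:reuseEdge_Ins} applied with split cut equal to this very cut, it is a valid minimum $b$-$d$-cut in $\Gp$ of cost $\lambda^\oplus(b,d)=0+\Delta=\Delta$. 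Replacing the edge by $\{b,d\}$ and assigning it cost $c^\oplus(b,d)=\Delta$ therefore records a correct minimum $b$-$d$-cut, and by Corollary~\ref{cor:bridgeDetec} (since $c^\oplus(b,d)=\lambda^\oplus(b,d)=\Delta>0$) this is exactly the bridge cut that must appear as a tree edge in any cut tree $T(\Gp)$.

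The one remaining subtlety — and the step I expect to be the main obstacle — is that removing a \emph{single} cost-$0$ edge from $\pi(b,d)$ and reattaching the new bridge must leave the tree topologically a tree on $V$ that still encodes valid cuts for the \emph{other} path edges, even though the new bridge cut may not coincide cut-side-for-cut-side with the old cost-$0$ cut it replaces. Here I would invoke Lemma~\ref{lem:shelteredByPrev} (sheltering): the other cost-$0$ edges on $\pi(b,d)$ induce cuts that are non-crossing with the new bridge cut, so the remaining path edges can be understood as minimum separating cuts within one of the two bridge sides, where the connectivity is governed entirely by the original graph $G$ and is unaffected by the insertion of $\{b,d\}$. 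Since those edges lie off the new bridge's cut-path for their own cut pairs, Lemma~\ref{lem:path}(i) again guarantees they retain cost $\lambda$ in $\Gp$. Assembling these observations, every edge of the modified tree induces a minimum separating cut of the stated cost, so the result is a valid cut tree $T(\Gp)$.

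I would finish by noting that the existence of a cost-$0$ edge to replace is exactly what makes the operation well-defined: the lemma presupposes such an edge on $\pi(b,d)$, and the argument above confirms one always exists when $\lambda(b,d)=0$. The cleanest way to present this may be to phrase the whole thing as an instance of Theorem~\ref{the:tool}, taking $K$ to be all edges of $T(G)$ off the path together with the new bridge cut, and $F$ the corresponding cut pairs; but the direct edge-by-edge verification above is more transparent and I would lead with it.
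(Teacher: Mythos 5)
Your treatment of the off-path edges (Lemma~\ref{lem:path}(i)), the existence of a cost-$0$ edge on $\pi(b,d)$ (from $\lambda(b,d)=0$), and the new bridge edge itself (Corollary~\ref{cor:reuseEdge_Ins}) is correct and matches the paper. But you misplace the real difficulty. The step you flag as the subtlety---that the new bridge cut ``may not coincide cut-side-for-cut-side with the old cost-$0$ cut''---is not an issue at all: deleting the cost-$0$ edge from $T(G)$ and deleting $\{b,d\}$ from the modified tree produce the \emph{same} vertex partition, since the rest of the tree is untouched. What actually changes are the partitions induced by the \emph{other} edges of $\pi(b,d)$: in the rewired tree, each such edge now has the entire $d$-side subtree $T_d$ (or the $b$-side) swapped to the opposite side of its cut.

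This is where your proof has a genuine gap. Those other path edges need not have cost $0$ (the path generally runs through positive-cost edges inside the components of $b$ and $d$, possibly through further components), and Lemma~\ref{lem:path}(i) does not apply to them: they lie \emph{on} $\pi(b,d)$, their old cuts separate $b$ and $d$, and the cost of those old partitions increases by $\Delta$ in $\Gp$. Your claim that their connectivity ``is governed entirely by the original graph $G$ and is unaffected'' is exactly what must be proved, and sheltering (Lemma~\ref{lem:shelteredByPrev}) cannot deliver it: sheltering lets a minimum cut be reshaped so as not to split $T_d$, but the reshaped cut may still place $T_d$ opposite $b$, i.e., still separate $b$ and $d$ and thus still pay the extra $\Delta$. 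The paper's proof supplies the missing ingredient, which relies on the cost being $0$ rather than on non-crossingness: since edge costs are positive, no edge of $G$ crosses the cost-$0$ cut, so $T_d$ is a union of connected components of $G$ and can be swapped wholesale across any cut without changing the set of crossing edges, hence without changing the cost. This swap shows simultaneously that the new partition induced by each remaining path edge $\{x,y\}$ costs $\lambda(x,y)$ in $\Gp$ and no longer separates $b$ and $d$; together with $\lambda^\oplus(x,y)\geq\lambda(x,y)$ it is therefore minimum. Finally, your closing suggestion to invoke Theorem~\ref{the:tool} with $K$ consisting of the off-path edges plus the bridge cut proves strictly less than the lemma: it yields only a valid intermediate tree in which the $|\pi(b,d)|-1$ remaining path edges are thin and would still cost cut computations, whereas the lemma reuses the entire tree with none.
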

\begin{proof}
	Since $\{b,d\}$ is a new bridge, $b$ and $d$ are in different connected components in $G$. 
	Hence, swapping one of these components on the other side of a cut that previously separated $b$ and $d$ such that both 
	components are on a common side does not change the cost of the cut,
	as the set of edges crossing the cut in $G$ remains the same.
	The edge of cost $0$ in $T(G)$ that is replaced by $\{b,d\}$ lies on $\pi(b,d)$ and thus deleting 
	this edge yields two connected components in $T(G)$ that correspond to the connected components 
	containing $b$ and $d$ in $G$. Reconnecting these components by $\{b,d\}$ in $T(G)$ 
	yields again a tree and equals the swapping of one component to the side of the other 
	component for each cut in $T(G)$ that previously separated $b$ and $d$. All other cuts in $T(G)$ remain the same. 
	Hence, after the replacement, the remaining edges in the resulting tree still represent minimum 
	separating cuts with respect to the same cut pairs as before, while the new edge $\{b,d\}$ obviously 
	represents a minimum $b$-$d$-cut in $\Gp$.
\end{proof}
\noindent
If $\{b,d\}$ is deleted or the cost decreases, handling bridges (always detectable by Corollary~\ref{cor:bridgeDetec}) is also easy.
\newcommand{\lemmaReuseTreeDel}{
	If $\{b,d\}$ is a bridge in $G$ and the cost 
	decreases by $\Delta$ (or $\{b,d\}$ is deleted), 
	decreasing the edge cost on 
	$\pi(b,d)$ in $T(G)$ by $\Delta$ yields a 
	new cut tree $T(\Gm)$. 
}
\begin{lemma}
\label{lem:reuseTree_Del}
	\lemmaReuseTreeDel
\end{lemma}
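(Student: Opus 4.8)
The plan is to mirror the structure of the proof of Lemma~\ref{lem:reuseTree_Ins}, exploiting the special structure that bridges impose on the cut tree. First I would recall from Corollary~\ref{cor:bridgeDetec} that a bridge $\{b,d\}$ in $G$ is itself an edge of $T(G)$, representing the cut whose two sides are exactly the two components obtained by removing the bridge from $G$; moreover this edge is the \emph{only} edge of $\pi(b,d)$, since $b$ and $d$ are adjacent in the tree. Thus ``decreasing the edge cost on $\pi(b,d)$ by $\Delta$'' concerns precisely this single tree edge $\{b,d\}$, whose cost is $c_T(b,d)=\lambda(b,d)=c(b,d)$.

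The key observation is that the bridge $\{b,d\}$ is the only edge of $T(G)$ representing a cut that separates $b$ and $d$, and it is the unique minimum $b$-$d$-cut (a bridge-cut has cost $c(b,d)$, and every $b$-$d$-cut must contain the bridge, so no cheaper separating cut exists). Hence by Lemma~\ref{lem:path}(ii) this edge remains a minimum $b$-$d$-cut in $\Gm$ with cost $\lambda^\ominus(b,d)=\lambda(b,d)-\Delta$, which is exactly the cost after the prescribed decrease. All remaining edges of $T(G)$ represent cuts that do not separate $b$ and $d$, so their crossing-edge sets in $G$ do not contain $\{b,d\}$; therefore their costs are unaffected by the change, and by Lemma~\ref{lem:path}(i) they remain minimum separating cuts in $\Gm$ with unchanged cost and unchanged cut pairs. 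Consequently every edge of the modified tree represents a minimum separating cut of $\Gm$ for its associated cut pair, so the modified tree is a valid cut tree $T(\Gm)$.

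Since the tree structure itself is untouched --- only one edge cost changes --- there is no reconnection or reshaping to verify, and the correctness follows directly from the two cases of Lemma~\ref{lem:path} without further work. The one point that needs care, and which I regard as the main (though minor) obstacle, is confirming that $\{b,d\}$ really is the \emph{only} edge on $\pi(b,d)$ so that the instruction ``decrease the cost on $\pi(b,d)$'' is unambiguous and coincides with decreasing $c_T(b,d)$; this rests on Corollary~\ref{cor:bridgeDetec}, which places the bridge directly as a tree edge between its endpoints. I would also briefly note the degenerate possibility that deleting $\{b,d\}$ ($\Delta=c(b,d)$) drives this cut cost to $0$, which is consistent: the two former bridge sides become separate connected components of $\Gm$, and a cost-$0$ edge between components is exactly what a cut tree of a disconnected graph prescribes.
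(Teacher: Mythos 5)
Your treatment of the path edge itself is fine and matches the paper: by Corollary~\ref{cor:bridgeDetec} the bridge is itself a tree edge, so $\pi(b,d)$ consists of exactly the edge $\{b,d\}$, and Lemma~\ref{lem:path}(ii) gives that this edge remains a minimum $b$-$d$-cut in $\Gm$ with cost $\lambda(b,d)-\Delta$. The genuine gap is in your handling of all the \emph{other} edges. You cite Lemma~\ref{lem:path}(i) to conclude that they ``remain minimum separating cuts in $\Gm$'', but part (i) of that lemma is a statement about $\Gp$ only; the lemma makes no claim about off-path edges in $\Gm$, and for good reason. The fact that such a cut's cost is unchanged (because it does not separate $b$ and $d$) does not imply it is still a \emph{minimum} separating cut after a cost decrease: the connectivity $\lambda^\ominus(u,v)$ may drop because some \emph{other} cut --- one that does separate $b$ and $d$ --- has become cheaper in $\Gm$. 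This is precisely the difficulty of the decrease case that the paper stresses. Note that your argument for the off-path edges never uses the bridge hypothesis at all; if it were valid as written, it would apply verbatim when $\{b,d\}$ is not a bridge and ``prove'' that every off-path tree edge survives any cost decrease, which is false in general.

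The missing step --- the only place where the bridge property does real work for the off-path edges --- is the paper's closing argument: a hypothetical $u$-$v$-cut that is cheaper in $\Gm$ than the old minimum would have to separate $b$ and $d$, since otherwise its cost is identical in $G$ and it would already have been cheaper in $G$, a contradiction. But because $\{b,d\}$ is a bridge, such a cut can be replaced by one of at most the same cost that does not separate $b$ and $d$: move the entire $d$-side of the bridge onto the side of the cut containing $b$; this only removes crossing edges (the bridge itself and edges internal to the $d$-side) and leaves $u$ and $v$ separated, since they lie on a common bridge side. The resulting cut has the same cost in $G$ and $\Gm$, so it would already have been cheaper in $G$ --- contradiction. (Equivalently one can invoke the sheltering idea of Lemma~\ref{lem:shelteredByPrev}, cf.~Corollary~\ref{cor:reuseOtherBridge}.) Without this argument, your proof of the off-path case rests on a misquotation of Lemma~\ref{lem:path} and does not establish the lemma.
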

\begin{proof}
	According to Corollary~\ref{cor:bridgeDetec}, it is $\pi(b,d) = \{b,d\}$ in 
	$T(G)$, and $\{b,d\}$ remains a valid cut with cost $\lambda(b,d) - \Delta$ 
	in $\Gm$, by Lemma~\ref{lem:path}.
	All other edges in $T(G)$ represent minimum separating cuts in $G$ 
	with respect to vertices that lie on a common cut side. In particular these 
	cuts do not separate $b$ and $d$.
	Hence, any cheaper cut in $\Gm$ would also not separate $b$ and $d$, 
	and thus, would have been already cheaper in $G$. 
\end{proof}

\noindent
If $\{b,d\}$ is no bridge, at least other bridges in $G$ can still be reused if $\{b,d\}$ is deleted or the edge cost decreases.
Observe that a minimum separating cut in~$G$ only becomes invalid in $\Gm$ if there is a cheaper cut in $\Gm$ that separates the same vertex pair. Such a cut necessarily crosses the changing edge $\{b,d\}$ in $G$, since otherwise it would have been already cheaper in~$G$.
Hence, an edge in $E_T$ corresponding to a bridge in $G$ cannot become invalid, since any cut in~$\Gm$ that crosses $\{b,d\}$ besides the bridge would be more expensive.
In particular, this also holds for zero-weighted edges in $E_T$.
\begin{corollary}\label{cor:reuseOtherBridge}
Let $\{u,v\}$ denote an edge in $T(G)$ with $c_T(u,v) = 0$ or an edge that corresponds to a bridge in $G$. Then $\{u,v\}$ is still a minimum $u$-$v$-cut in $\Gm$.
\end{corollary}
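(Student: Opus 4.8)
The plan is to argue by contradiction, leaning on the observation stated just above the corollary: a minimum separating cut of $G$ can only become invalid in $\Gm$ if some cheaper cut in $\Gm$ separates the same pair, and such a cheaper cut is necessarily crossed by the changing edge $\{b,d\}$ in $G$ (otherwise its cost would already have been smaller in $G$, contradicting minimality there). I would treat the two kinds of edges separately, first noting that in both cases $\{u,v\}$ already induces a minimum $u$-$v$-cut in $G$ (a zero edge trivially, a bridge edge by Corollary~\ref{cor:bridgeDetec}), so that the observation is applicable.

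For a zero-weighted edge the argument is immediate: $c_T(u,v) = 0$ means the induced cut has an empty crossing set in $G$, so $u$ and $v$ lie in different connected components and the cut still has cost $0$ in $\Gm$. Since edge costs are positive, every $u$-$v$-cut costs at least $0$, so the cut remains a minimum $u$-$v$-cut in $\Gm$ with $\lambda^\ominus(u,v) = 0$.

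For a bridge I would first dispose of the degenerate possibility $\{b,d\} = \{u,v\}$, which is already covered by Lemma~\ref{lem:path}(ii). Otherwise $\{b,d\} \neq \{u,v\}$, and the bridge cut keeps its cost $c(u,v) = \lambda(u,v)$ in $\Gm$. Suppose for contradiction that some $u$-$v$-cut $(C, V \setminus C)$ is cheaper in $\Gm$. By the observation it must cross $\{b,d\}$ in $G$; moreover, separating $u$ from $v$ forces it to cross the bridge edge $\{u,v\}$ as well. Hence its crossing set in $G$ contains the two distinct edges $\{u,v\}$ and $\{b,d\}$, so $c(C, V \setminus C) \ge c(u,v) + c(b,d) \ge c(u,v) + \Delta$. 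Subtracting the contribution $\Delta$ lost at $\{b,d\}$ gives $c^\ominus(C, V \setminus C) = c(C, V \setminus C) - \Delta \ge c(u,v)$, contradicting that this cut is cheaper than the bridge cut of cost $c(u,v)$.

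The step I expect to be the crux is recognizing that a would-be cheaper cut is forced to pay for the bridge edge and the changing edge simultaneously: the observation supplies the crossing of $\{b,d\}$, while the mere fact of separating $u$ and $v$ supplies the crossing of the bridge $\{u,v\}$. Once these two distinct contributions are isolated, the decrease of at most $\Delta$ at $\{b,d\}$ cannot push the cut below the bridge cost, and the non-crossing structure of the previous cuts plays no further role. Everything else is routine bookkeeping of cut costs before and after the change.
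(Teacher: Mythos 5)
Your proof is correct and takes essentially the same route as the paper, which justifies the corollary by exactly the observation you invoke: a would-be cheaper cut in $\Gm$ must cross $\{b,d\}$ in $G$, and since any $u$-$v$-cut also crosses the bridge $\{u,v\}$, it costs at least $c(u,v)+c(b,d)$ in $G$ and hence at least $c(u,v)$ after the decrease of $\Delta\leq c(b,d)$, with the zero-cost case immediate from nonnegativity. Your explicit cost bookkeeping and separate handling of the degenerate case $\{b,d\}=\{u,v\}$ merely make rigorous what the paper states informally in the two sentences preceding the corollary.
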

Lemma~\ref{lem:unfoldDel} shows how a cut 
that is still valid in $\Gm$ may allow the reuse of all edges in~$E_T$ that lie on one cut side. Figure~\ref{fig:unfold} shows an example.
Lemma~\ref{lem:reuseByCosts} says that a cut 
that is cheap enough, cannot become invalid in $\Gm$. 
Note that the bound considered in this context depends on the current intermediate tree.
\newcommand{\lemmaUnfoldDel}{
	Let $(U, V \setminus U)$ be a minimum 
	$u$-$v$-cut in $\Gm$ with $\{b,d\} \subseteq V\setminus U$ 
	and $\{g,h\} \in E_T$ with $g,h \in U$.
	Then $\{g,h\}$ is a minimum separating cut in $\Gm$ 
	for all its previous cut pairs within $U$.
}
\begin{lemma}
\label{lem:unfoldDel}
	\lemmaUnfoldDel
\end{lemma}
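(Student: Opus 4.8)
The plan is to argue by contradiction, using the cut $(U,V\setminus U)$ itself as a \emph{shelter} in the sense of Lemma~\ref{lem:shelteredByPrev} to transform any hypothetical cheaper cut in $\Gm$ into one that avoids the changed edge $\{b,d\}$; such a cut then has the same cost in $G$ and in $\Gm$, which contradicts the minimality of $\{g,h\}$ in $G$. Throughout, abbreviate the cost of a cut $(Z,V\setminus Z)$ in $G$ by $c(Z)$ and its cost in $\Gm$ by $c^\ominus(Z)$.

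Fix an arbitrary previous cut pair $\{x,y\}\subseteq U$ of the cut $(W,V\setminus W)$ induced by $\{g,h\}$ (say $g\in W$). Since $\{x,y\}$ is a cut pair in $G$, this cut is a minimum $x$-$y$-cut in $G$, so $c(W)=\lambda(x,y)$. Suppose, for contradiction, that $(W,V\setminus W)$ is \emph{not} a minimum $x$-$y$-cut in $\Gm$. Then $\lambda^\ominus(x,y)<c^\ominus(W)$, and I would pick a minimum $x$-$y$-cut $(H,V\setminus H)$ in $\Gm$ realizing this, i.e.\ $c^\ominus(H)=\lambda^\ominus(x,y)<c^\ominus(W)$.

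Next I invoke the sheltering property \emph{inside} $\Gm$: Lemma~\ref{lem:shelteredByPrev} holds for any weighted graph, hence for $\Gm$, and $(U,V\setminus U)$ is a minimum $u$-$v$-cut in $\Gm$, so it shelters $V\setminus U$. Because both cut-pair vertices $x,y$ lie in $U$, i.e.\ outside $V\setminus U$, the cut $(H,V\setminus H)$ can be reshaped into a minimum $x$-$y$-cut $(H',V\setminus H')$ in $\Gm$ of the same cost that no longer splits $V\setminus U$. In particular $b$ and $d$, both in $V\setminus U$, end up on a common side of $(H',V\setminus H')$, so this cut does not separate $b$ and $d$ and thus does not cross $\{b,d\}$; consequently its cost is unaffected by the change, $c(H')=c^\ominus(H')$. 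Now chain the (in)equalities. In $G$ the cut $(H',V\setminus H')$ still separates $x$ and $y$, so $c(H')\ge\lambda(x,y)=c(W)$; and since decreasing $c(b,d)$ can only lower the cost of a fixed cut, $c^\ominus(W)\le c(W)$. Combining everything gives
\[
  c^\ominus(W)\;\le\;c(W)\;=\;\lambda(x,y)\;\le\;c(H')\;=\;c^\ominus(H')\;=\;c^\ominus(H)\;<\;c^\ominus(W),
\]
a contradiction. Hence $(W,V\setminus W)$ remains a minimum $x$-$y$-cut in $\Gm$, and as $\{x,y\}$ was an arbitrary cut pair of $\{g,h\}$ within $U$, the claim follows.

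The main obstacle is the middle step: realizing that $(U,V\setminus U)$ can serve as a shelter \emph{within $\Gm$} to push a hypothetical cheaper cut off the changed edge $\{b,d\}$. Once the reshaped cut avoids $\{b,d\}$, its cost coincides in $G$ and $\Gm$, and the contradiction follows purely from the minimality of $\{g,h\}$ in $G$ together with the monotonicity of cut costs under a cost decrease. The one point to verify carefully is that Lemma~\ref{lem:shelteredByPrev} is applied with the roles set so that the sheltered set is $V\setminus U$ while the cut pair $\{x,y\}$ lies entirely outside it in $U$.
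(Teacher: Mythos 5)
Your proof is correct and follows essentially the same route as the paper's: assume a cheaper cut separating the pair in $\Gm$, bend it along $(U,V\setminus U)$ via Lemma~\ref{lem:shelteredByPrev} (applied in $\Gm$) so that it no longer splits $V\setminus U$ and hence avoids $\{b,d\}$, and conclude it would already have been cheaper in $G$ --- a contradiction. Your write-up is if anything slightly more careful, since you treat an arbitrary previous cut pair $\{x,y\}\subseteq U$ rather than just $\{g,h\}$ and only use the monotonicity $c^\ominus(W)\le c(W)$ instead of asserting equality.
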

\begin{proof}
	Suppose there exists a minimum $g$-$h$-cut in $\Gm$ that is cheaper 
	than the cut represented by $\{g,h\}$. Note that the cut $\{g,h\}$ costs 
	the same in $G$ and $\Gm$.
	Such a cheaper minimum $g$-$h$-cut in $\Gm$ would separate $b$ 
	and $d$ in $V\setminus U$. At the same time, Lemma~\ref{lem:shelteredByPrev} 
	would allow to bend such a cut along $V\setminus U$ such that it induces a 
	minimum $g$-$h$-cut that does not separate $b$ and $d$. 
	The latter would have been already cheaper in $G$.
\end{proof}
%
\newcommand{\lemmaReuseByCosts}{
	Let $T_* = (V, E_*, c_*)$ denote a valid intermediate cut tree for $\Gm$, 
	where all edges on $\pi(b,d)$ are fat and let $\{u,v\}$ be a thin edge with 
	$v$ on $\pi(b,d)$ such that $\{u,v\}$ represents a minimum $u$-$v$-cut in $G$. 
	Let $N_{\pi}$ denote the set of neighbors of $v$ on $\pi(b,d)$. 
	If $\lambda(u,v) \leq \min_{x \in N_{\piâ}}\{c_*(x,v)\}$, then $\{u,v\}$ 
	is a minimum $u$-$v$-cut in $\Gm$.
}
\begin{lemma}
\label{lem:reuseByCosts}
	\lemmaReuseByCosts
\end{lemma}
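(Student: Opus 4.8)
The plan is to argue by contradiction. Suppose $\{u,v\}$ is not a minimum $u$-$v$-cut in $\Gm$, and let $(W,V\setminus W)$ with $u\in W$ be a strictly cheaper one, so $c^\ominus(W,V\setminus W)<\lambda(u,v)$. I would first note, exactly as in the observation preceding Corollary~\ref{cor:reuseOtherBridge}, that any $u$-$v$-cut not separating $b$ and $d$ has the same cost in $G$ and in $\Gm$ (the only modified edge is $\{b,d\}$) and hence costs at least $\lambda(u,v)$, since $\{u,v\}$ is a minimum $u$-$v$-cut in $G$. Consequently the cheaper cut $(W,V\setminus W)$ must separate $b$ and $d$; as $v\in V\setminus W$, exactly one of $b,d$ lies with $v$ in $V\setminus W$ and the other lies with $u$ in $W$.

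Next I would exploit the local structure at $v$. Because $\{u,v\}$ is a thin edge, $u$ and $v$ lie in the same node $M$ of $T_*$; therefore neither $u$ nor $v$ lies in the subtree $X$ hanging across any fat edge $\{x,v\}$ with $x\in N_\pi$. Each such neighbor cut $(X,V\setminus X)$ is a minimum separating cut in $\Gm$ (a fat edge of a valid intermediate tree for $\Gm$) of cost $c_*(x,v)$, and, by validity of the tree, its cut pair $\{p,q\}$ has one vertex in $M$ and the other in $X$, with $c_*(x,v)=\lambda^\ominus(p,q)$. The cost hypothesis then yields $c^\ominus(W,V\setminus W)<\lambda(u,v)\le c_*(x,v)=\lambda^\ominus(p,q)$, so the cheap cut is \emph{too cheap to be a $p$-$q$-cut} and hence cannot separate $p$ from $q$. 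This pins down, for each neighbor, how $(W,V\setminus W)$ may run through the region around $v$.

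Finally I would invoke Gusfield's sheltering (Lemma~\ref{lem:shelteredByPrev}, which holds verbatim in $\Gm$): since $u,v\notin X$, the cut $(W,V\setminus W)$ can be bent along the neighbor cut pointing toward the endpoint of $\{b,d\}$ that currently sits with $u$, pushing that endpoint over to join the one already on $v$'s side. The reshaped set is still a minimum $u$-$v$-cut in $\Gm$ of the same cost but no longer separates $b$ and $d$, contradicting the lower bound $\lambda(u,v)$ from the first step. The crux, and the step I expect to cost the most care, is controlling the \emph{direction} of this bend: Lemma~\ref{lem:shelteredByPrev} only guarantees that $X$ can be pushed to the side of its cut-pair vertex, so I must rule out the configuration in which both neighbor cuts are pushed the ``wrong'' way, leaving $b$ and $d$ split. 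This is precisely where the cost hypothesis must be used: knowing that $(W,V\setminus W)$ cannot separate either neighbor cut pair $\{p,q\}$ constrains the positions of $p$ and $q$ relative to $W$, and I would use this---together with submodularity of the cut function of $\Gm$ should a residual ambiguous case remain---to force at least one neighbor cut into the orientation whose bend unites $b$ and $d$.
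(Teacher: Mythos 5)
Your opening two steps coincide with the paper's proof: a strictly cheaper $u$-$v$-cut $(W,V\setminus W)$ in $\Gm$ would have the same cost in $G$ if it did not separate $b$ and $d$, hence it must separate them; and the fat edges incident to $v$ on $\pi(b,d)$ are minimum separating cuts in $\Gm$ whose costs dominate $\lambda(u,v)$, so Lemma~\ref{lem:shelteredByPrev} lets you bend $(W,V\setminus W)$ along them. The gap is in your final step. You want to \emph{force} the bend to unite $b$ and $d$, and you hope that ``$(W,V\setminus W)$ cannot separate the cut pair $\{p,q\}$'' pins down the orientation. It does not: by validity, $q$ lies in the node $M$ containing both $u$ and $v$, and $M$ is itself split by $(W,V\setminus W)$ (since $u\in W$, $v\notin W$), so the pair $\{p,q\}$ may well sit entirely on $u$'s side, in which case Lemma~\ref{lem:shelteredByPrev} pushes the sheltered subtree \emph{into} $W$. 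Consequently the mixed configuration---the subtree containing $b$ bent to $u$'s side, the subtree containing $d$ bent to $v$'s side---survives both bends with $b$ and $d$ still separated, and neither of your proposed repairs closes it: the position constraint is silent here, and submodularity is named without any inequality being derived from it. (Incidentally, the configuration you feared, both subtrees pushed the ``wrong'' way, is harmless: then $b$ and $d$ land together on $u$'s side and your step-1 contradiction already applies.)

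The paper resolves the mixed case not by controlling the bend but by exploiting it: after reshaping, the cut splits neither sheltered subtree, so, as it separates $b$ from $d$, it must separate $v$ from exactly one path neighbor $x\in N_\pi$; hence its cost is at least $\lambda^\ominus(x,v)=c_*(x,v)\geq\lambda(u,v)$, contradicting its assumed cheapness. You possess every ingredient for this---it is your step-2 inequality read against the pair $\{x,v\}$ instead of $\{p,q\}$---so the fix is to replace ``force the right orientation'' by a case distinction: either the reshaped cut no longer separates $b,d$ (your step-1 contradiction), or it separates $v$ from some $x\in N_\pi$ (cost contradiction). Note that this last step uses that the endpoints of the fat path edges are themselves cut pairs in $\Gm$, i.e.\ $c_*(x,v)=\lambda^\ominus(x,v)$, which the paper's representation convention (fat edges weighted by the new connectivity of their incident vertices) guarantees, whereas the bare validity you invoke, with a cut pair $\{p,q\}$ possibly distinct from $\{x,v\}$, would not suffice.
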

\begin{proof}
	The edges on $\pi(b,d)$ incident to $v$ already represent minimum 
	separating cuts in~$\Gm$. Any new $u$-$v$-cut in $\Gm$ that is cheaper 
	than the cut represented by $\{u,v\}$, must separate $b$ and $d$, i.e., must separate 
	two adjacent vertices on $\pi(b,d)$. However, the fat edges incident to $v$ on 
	$\pi(b,d)$ shelter the remaining path edges from being separated by a new 
	cut (cp.~Lemma~\ref{lem:shelteredByPrev}). Thus, there is a new cut that separates 
	$v$ from exactly one of its neighbors on the path, denoted by~$x$. This is, the new 
	cut must be at least as expensive as the cost of a minimum $x$-$v$-cut  in $\Gm$, 
	which is not possible if $\lambda(u,v)$ in $G$ is already at most equal.
\end{proof}
\begin{figure}[tb]
\centering
	\subfigure[Edges in $U$ remain valid, cp.~Lemma~\ref{lem:unfoldDel}.]{
		\label{fig:unfold}
		\includegraphics[width = 7cm, page=1]{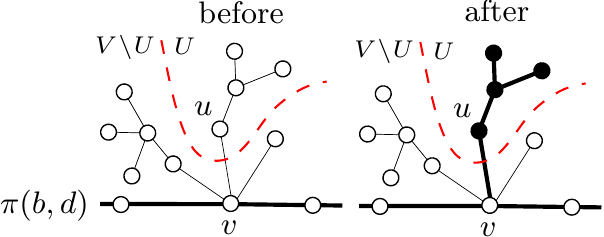}
	}
	\subfigure[Reshaping new cut by reconnecting edges.]{
		\label{fig:delalgol}
		\includegraphics[width = 7.2cm, page=1]{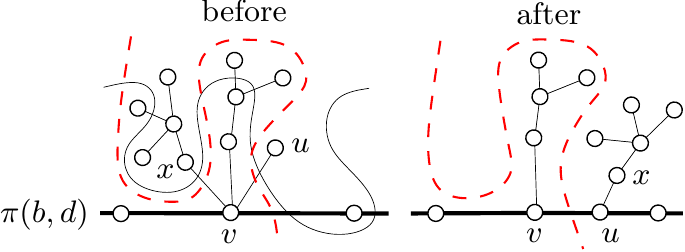}
	}
	\caption{(a) cut $\{u,v\}$ remains valid, subtree $U$ can be reused. (b) new cheaper cut for $\{u,v\}$ (black) can be reshaped by Theo.~\ref{theo:delSelteredByOld}, Lem.~\ref{lem:shelteredByPrev} (dashed), $\{u,v\}$ becomes a fat edge.}
	\label{fig:del_algo}
\end{figure}
%
\section{The Dynamic Cut Tree Algorithm}\label{sec:DynCutTreeAlgo}
In this section we introduce one update routine for each type of change: inserting a vertex, deleting a vertex, increasing an edge cost or inserting an edge, decreasing an edge cost or deleting an edge. These routines base on the static iterative approach but involve the lemmas from Sec.~\ref{sec:Theo} in order to save cut computations. We again represent intermediate cut trees by fat and thin edges, which simplifies the reshaping of cuts.

We start with the routines for vertex insertion and deletion, which trivially abandon cut computations. We leave the rather basic proofs of correctness to the reader. 
A vertex~$b$ inserted into $G$ forms a connected component in $\Gp$. Hence, we insert $b$ into $T(G)$ connecting it to the remaining tree by an arbitrary zero-weighted edge. If $b$ is deleted from $G$, it was a single connected component in $G$ before. Hence, in $T(G)$ $b$ is only incident to zero-weighted edges. Deleting $b$ from $T(G)$ and reconnecting the resulting subtrees by arbitrary edges of cost $0$ yields a valid intermediate cut tree for $\Gm$. 

The routine for increasing an edge cost or inserting an edge first checks if $\{b,d\}$ is a (maybe newly inserted) bridge in $G$. In this case, it adapts $c_T(b,d)$ according to Corollary~\ref{lem:reuseEdge_Ins} if $\{b,d\}$ already exists in $G$, and rebuilds $T(G)$ according to Lemma~\ref{lem:reuseTree_Ins} otherwise. Both requires no cut computation. 
If $\{b,d\}$ is no bridge, the routine constructs the intermediate cut tree shown in Figure~\ref{fig:point_add}, reusing all edges that are not on $\pi(b,d)$. Furthermore, it chooses one edge on $\pi(b,d)$ that represents a minimum $b$-$d$-cut in $\Gp$ and draws this edge fat (cp.~Corollary~\ref{lem:reuseEdge_Ins}). The resulting tree is then further processed by \textsc{Cut Tree}, which costs  $|\pi(b,d)|-1$ cut computations and is correct by Theorem~\ref{the:tool}.
%
\begin{algorithm2e}[h]
\caption{\textsc{Decrease or Delete}}\label{alg:edgeDel}
\SetKwComment{tco}{\%}{}
\KwIn{$T(G)$, $b,d$, $c(b,d)$, $c^\ominus(b,d)$, $\Delta := c(b,d) - c^\ominus(b,d)$}
\KwOut{$T(\Gm)$}
$T_* \gets T(G)$\\
\lIf{\emph{$\{b,d\}$ is a bridge}} {apply Lemma~\ref{lem:reuseTree_Del}; \Return $T(\Gm) \gets T_*$}\nllabel{ln:bridge} \\
Construct intermediate tree according to Figure~\ref{fig:point_del} \nllabel{ln:tree}\\
$Q \gets$ thin edges non-increasingly ordered by their costs \nllabel{ln:sort} \\
 \While{$Q \not= \emptyset$}{ 
 	$\{u,v\} \gets $ most expensive thin edge with $v$ on $\pi(b,d)$\nllabel{ln:steppairs}\\
	$N_{\pi} \gets$ neighbors of $v$ on $\pi(b,d)$; $L \gets \min_{x\in N_{\pi}}\{c_*(x,v)\}$\nllabel{ln:L}\\
	\If( \myco*[f]{Lem.~\ref{lem:reuseByCosts} and Cor.~\ref{cor:reuseOtherBridge}}){ $L\geq \lambda(u,v)$ or $\{u,v\} \in E$ with $\lambda(u,v) = c(u,v)$ }{\nllabel{ln:check}
		draw $\{u,v\}$ as a fat edge\\
		consider the subtree $U$ rooted at $u$ with $v\notin U$, \myco*[f]{Lem.~\ref{lem:unfoldDel} and Fig.~\ref{fig:unfold}} \nllabel{ln:ok} \\
		draw all edges in $U$ fat, remove fat edges from $Q$\\
		continue loop
	}
  	$(U,V\setminus U) \gets $ minimum $u$-$v$-cut in $\Gm$ with $u \in U$ \nllabel{ln:cut}\\
	draw $\{u,v\}$ as a fat edge, remove $\{u,v\}$ from $Q$\\
	\lIf{ $\lambda(u,v) = c^\ominus(U,V\setminus U)$ }{\nllabel{ln:recomp}
		goto line \ref{ln:ok} \myco*[f]{old cut still valid}\\
	}
	$c_*(u,v) \gets c^\ominus(U,V\setminus U)$ \myco*[f]{otherwise}\\
	$N \gets$ neighbors of $v$ \nllabel{ln:N}\\
	\ForAll(\myco*[f]{bend split cut by~Theo.~\ref{theo:delSelteredByOld} and Lem.~\ref{lem:shelteredByPrev}}){$x\in N$ }{
		\lIf{ $x\in U$ }{
			reconnect $x$ to $u$ \nllabel{ln:for}\\
		}
	}
 }
\Return $T(\Gm) \gets T_*$
\end{algorithm2e}

The routine for decreasing an edge cost or deleting an edge is given by Algorithm~\ref{alg:edgeDel}. 
We assume $G$ and $\Gm$ to be available as global variables.
Whenever the intermediate tree $T_*$ changes during the run of Algorithm~\ref{alg:edgeDel}, the path $\pi(b,d)$ is implicitly updated without further notice. 
Thin edges are weighted by the old connectivity, fat edges by the new connectivity of their incident vertices. 
Whenever a vertex is reconnected, the newly occurring edge inherits the cost and the thickness from the disappearing edge.

Algorithm~\ref{alg:edgeDel} starts by checking if $\{b,d\}$ is a bridge (line~\ref{ln:bridge}) and reuses the whole cut tree $T(G)$ with adapted cost $c_T(b,d)$ (cp.~Lemma~\ref{lem:reuseTree_Del}) in this case.
Otherwise (line~\ref{ln:tree}), it constructs the intermediate tree shown in Figure~\ref{fig:point_del}, reusing all edges on $\pi(b,d)$ with adapted costs.
Then it proceeds with iterative steps similar to \textsc{Cut Tree}. However, the difference is, that the step pairs are not chosen arbitrarily, but according to the edges in $T(G)$, starting with those edges that are incident to a vertex $v$ on $\pi(b,d)$ (line~\ref{ln:steppairs}). In this way, each edge $\{u,v\}$ which is found to remain valid in line~\ref{ln:check} 
or line~\ref{ln:recomp} 
allows to retain a maximal subtree
(cp.~Lemma~\ref{lem:unfoldDel}), since $\{u,v\}$
is as close as possible to $\pi(b,d)$. The problem however is that cuts that are no longer valid, must be replaced by new cuts, which not necessarily respect the tree structure of $T(G)$. This is, a new cut possibly separates adjacent vertices in $T(G)$, which hence cannot be used as a step pair in a later step. Thus, we potentially miss valid cuts and the chance to retain further subtrees.

We solve this problem by reshaping the new cuts in the spirit of Gusfield.
Theorem~\ref{theo:delSelteredByOld} shows how arbitrary cuts in $\Gm$ (that separate $b$ and $d$) can be bend along old minimum separating cuts in $G$ without becoming more expensive (see Figure~\ref{fig:bendAlongOld}). 
\newcommand{\theoDelSelteredByOld}{
	Let $(X,V\setminus X)$ denote a minimum $x$-$y$-cut in $G$ with 
	$x \in X$, $y\in V\setminus X$ and $\{b,d\}\subseteq V\setminus X$. 
	Let further $(U, V\setminus U)$ denote a cut that separates $b,d$. 
	If (i)  $(U, V\setminus U)$ separates $x, y$ with $x\in U$, then 
	$c^\ominus(U\cup X, V\setminus (U\cup X)) \leq c^\ominus(U, V\setminus U)$. 
	If (ii) $(U, V\setminus U)$ does not separate $x, y$ with $x\in V\setminus U$, 
	then $c^\ominus(U\setminus X, V\setminus (U\setminus X)) \leq c^\ominus(U, V\setminus U)$.
}
\begin{theorem}
\label{theo:delSelteredByOld}
	\theoDelSelteredByOld
\end{theorem}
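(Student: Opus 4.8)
The plan is to mirror the uncrossing argument of Lemma~\ref{lem:shelteredByPrev}, but to run it inside $\Gm$ and to exploit that the only edge whose cost differs between $G$ and $\Gm$ is $\{b,d\}$. First I would partition $V$ into the four regions determined by the two cuts, $A := U\cap X$, $B := U\setminus X$, $C := X\setminus U$, and $D := V\setminus(U\cup X)$, and locate the distinguished vertices. In both cases $b,d\in V\setminus X = B\cup D$, and since $(U,V\setminus U)$ separates them, one lies in $B$ and the other in $D$; hence $\{b,d\}$ is a $B$--$D$ edge. This is the key observation: $\{b,d\}$ crosses neither $X$ (both endpoints lie in $V\setminus X$) nor any set contained in $A\cup C$ (it touches neither $A$ nor $C$), so $c^\ominus$ and $c$ agree on every such cut.

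The second step imports the minimality of $X$ from $G$ into $\Gm$. In case (i) I take the \emph{corner} set $A = U\cap X$: it contains $x$ (as $x\in X\cap U$), excludes $y$, and excludes $b,d$, so it separates $x$ and $y$ while $\{b,d\}$ does not cross it. Therefore $c^\ominus(A) = c(A) \ge c(X) = c^\ominus(X)$, where the middle inequality is minimality of the $x$-$y$-cut $(X,V\setminus X)$ in $G$ and the two equalities hold because $\{b,d\}$ crosses neither $A$ nor $X$. In case (ii) I instead take the corner $C = X\setminus U$, which now contains $x$ (as $x\in X\cap(V\setminus U)$) and again excludes $y,b,d$; the identical chain gives $c^\ominus(C) = c(C) \ge c(X) = c^\ominus(X)$.

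Finally I would invoke submodularity of the cut function $c^\ominus$, which holds for any nonnegative weights and hence in $\Gm$. In case (i), submodularity applied to $U$ and $X$ gives $c^\ominus(U) + c^\ominus(X) \ge c^\ominus(U\cup X) + c^\ominus(U\cap X)$; substituting $c^\ominus(X) \le c^\ominus(U\cap X)$ from the previous step yields $c^\ominus(U\cup X) \le c^\ominus(U)$, which is exactly the claim. In case (ii), submodularity applied to $U$ and $V\setminus X$ gives $c^\ominus(U) + c^\ominus(X) \ge c^\ominus(U\setminus X) + c^\ominus(X\setminus U)$, since the complement of $U\cup(V\setminus X)$ is precisely $C = X\setminus U$; substituting $c^\ominus(X) \le c^\ominus(C)$ yields $c^\ominus(U\setminus X) \le c^\ominus(U)$.

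The main obstacle, and the only genuinely non-routine point, is justifying the transfer of minimality from $G$ to $\Gm$: everything hinges on confirming that $\{b,d\}$ sits in the $B$--$D$ slot, so that both the corner cut and $X$ are left unchanged by the modification and the $G$-minimality of $(X,V\setminus X)$ can be read off in $\Gm$. The remaining work is pure bookkeeping, namely keeping the sides straight across the two symmetric cases, which the four-region decomposition controls.
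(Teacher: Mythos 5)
Your proof is correct, and it rests on exactly the two facts that carry the paper's proof: first, since $b,d\in V\setminus X$ and $(U,V\setminus U)$ separates them, the changed edge $\{b,d\}$ runs between $U\setminus X$ and $V\setminus(U\cup X)$, so $c$ and $c^\ominus$ agree on $(X,V\setminus X)$ and on the relevant corner cut ($U\cap X$ in case (i), $X\setminus U$ in case (ii)); second, the $G$-minimality of $(X,V\setminus X)$ then transfers to give $c^\ominus$ of the corner at least $c^\ominus(X)$. The packaging, however, is genuinely different. The paper argues by contradiction and verifies the needed inequality by hand: it expresses $c(X\cap U, V\setminus(X\cap U))$ and $c(X,V\setminus X)$ through $c^\ominus(U,V\setminus U)$ and $c^\ominus(X\cup U,V\setminus(X\cup U))$ plus cross terms between the four regions, and compares the cross terms pairwise (and analogously with $U\setminus X$ and $X\setminus U$ in case (ii)). You instead invoke submodularity of the cut function as a black box, $f(U)+f(X)\ge f(U\cup X)+f(U\cap X)$ for case (i), and its posimodular variant $f(U)+f(X)\ge f(U\setminus X)+f(X\setminus U)$, obtained by applying submodularity to $U$ and $V\setminus X$ and using symmetry, for case (ii), then cancel $c^\ominus(X)$ against the corner. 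These are the same computation at bottom --- the paper's region-by-region expansion is precisely an inline proof of the (posi)modular inequality in this configuration --- but your version is direct rather than by contradiction, shorter, and isolates the one case-dependent ingredient, namely which corner contains $x$. Two minor remarks: in case (i) you do not actually need to locate $y$ in $V\setminus(U\cup X)$, since $y\notin X\supseteq U\cap X$ already makes the corner an $x$-$y$-cut (and likewise $y\notin X\supseteq X\setminus U$ in case (ii)); and your appeal to submodularity tacitly uses nonnegative capacities in $\Gm$, which does hold here because $\Delta\le c(b,d)$, but is worth stating since the paper assumes positive costs only for $G$.
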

\begin{figure}[t]
\centering
	\subfigure[Deflected by $x$, Theorem~\ref{theo:delSelteredByOld}(i) bends $(U,V\setminus U)$ downwards along $X$.]{
		\label{fig:cutsInTheta_a}
		\includegraphics[width = 6.5cm]{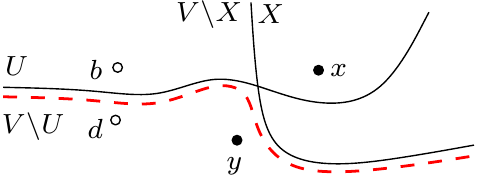}
	}
\hspace{4ex}
	\subfigure[Deflected by $x$, Theorem~\ref{theo:delSelteredByOld}(ii) bends $(U,V\setminus U)$ upwards along $X$.]{
		\label{fig:cutsInTheta_b}
		\includegraphics[width = 6.5cm]{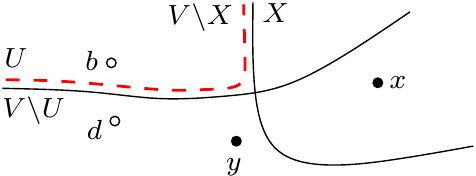}
	}
	\caption{Situation of Theorem~\ref{theo:delSelteredByOld}. Reshaping cuts in $\Gm$ along previous cuts in $G$.}
	\label{fig:bendAlongOld}
\end{figure}
\begin{proof}
	We prove Theorem~\ref{theo:delSelteredByOld}(i) by contradiction, using the fact that $(X,V\setminus X)$ is a minimum $x$-$v$-cut in $G$.
		We show that $(U \cap X, V\setminus (U\cap X))$ 
	would have been cheaper than the minimum $x$-$v$-cut $(X,V\setminus X)$ in $G$ 
	if $c^\ominus(U, V\setminus U)$ was cheaper than $c^\ominus(U \cup X, V\setminus (U\cup X))$ in $G^\ominus$.
	We express the costs of $(X\cap U,V\setminus (X\cap U))$ and $(X,V\setminus X)$ with the help of $(U,  V\setminus U)$ and $(X \cup U, V\setminus (X\cup U))$ considered in Theorem~\ref{theo:delSelteredByOld}(i).
	Note that $(X\cap U,V\setminus (X\cap U))$ and $(X,V\setminus X)$ do not separate $b$ and $d$. Thus, their costs are unaffected by the deletion and it makes no difference whether we consider the costs in $G$ or $\Gm$. We get\medskip

\noindent\begin{tabularx}{\textwidth}{rlclclclcl}
	\textit{(i)} &$c(X\cap U,V\setminus (X\cap U))$  &=& $c^\ominus(U, V \setminus U)$ & &\\ &  &-& $c^\ominus(U\setminus X, V\setminus U)$ &+& $c^\ominus(U\setminus X,X\cap U)$\\
	\textit{(ii)} &$c(X,V\setminus X)$  &=& $c^\ominus(X \cup U, V\setminus (X\cup U))$ & & \\ & &-& $c^\ominus(U\setminus X,V\setminus (X\cup U))$ &+& $c^\ominus(U\setminus X,X)$	
	\end{tabularx}\medskip

	\noindent
	Since $V\setminus(X\cup U) \subseteq V\setminus U$, it is $c^\ominus(U\setminus X,V\setminus(X\cup U)) \leq  c^\ominus(U\setminus X,V\setminus U)$. From
	$X\cap U \subseteq X$ further follows that $c^\ominus(U\setminus X,X\cap U) \leq c^\ominus(U\setminus X,X)$; together with the assumption that
	$\cmm(U, V\setminus U) < \cmm(X\cup U,V\setminus (X\cup U))$, we see the following if we subtract \textit{(ii)} from \textit{(i)}:
	\begin{eqnarray}
	c(X\cap U,V\setminus (X\cap U)) &-& c(X,V\setminus X)\nonumber \\ 
	& = & [c^\ominus(U, V\setminus U)- c^\ominus(X\cup U, V\setminus(X\cup U))]\nonumber\\
	& - & [c^\ominus(U\setminus X, V\setminus U) - c^\ominus(U\setminus X, V\setminus(X\cup U))]\nonumber\\
	& + & [c^\ominus(U\setminus X,X\cap U) - c^\ominus(U\setminus X,X)] < 0\nonumber
	\end{eqnarray}
	This contradicts the fact that $(X,V\setminus X)$ is a minimum $x$-$v$-cut in $G$.
%
%

	We prove Theorem~\ref{theo:delSelteredByOld}(ii) with the help of the same technique.
		We show that $(X\setminus U, V\setminus (X\setminus U))$ 
	would have been cheaper than the minimum $x$-$v$-cut $(X,V\setminus X)$ in $G$ 
	if $c^\ominus(U, V\setminus U)$ was cheaper than $c^\ominus(U \setminus X, V\setminus (U\setminus X))$ in $G^\ominus$.
	We express the costs of $(X\setminus U,V\setminus (X\setminus U))$ and $(X,V\setminus X)$ with the help of $(U,  V\setminus U)$ and $(U\setminus X, V\setminus (U\setminus X))$ considered in Theorem~\ref{theo:delSelteredByOld}(ii).
	Note that $(X\setminus U,V\setminus (X\setminus U))$ and $(X,V\setminus X)$ do not separate $b$ and $d$. Thus, their costs are unaffected by the deletion and it makes no difference whether we consider the costs in $G$ or $\Gm$. We get\medskip

\noindent\begin{tabularx}{\textwidth}{rlclclclcl}
	\textit{(i)} &$c(X\setminus U,V\setminus (X\setminus U))$  &=& $c^\ominus(U, V \setminus U)$ & &\\ &  &-& $c^\ominus(U, V\setminus (X\cup U))$ &+& $c^\ominus(X\setminus U,V\setminus (X\cup U))$\\
	\textit{(ii)} &$c(X,V\setminus X)$  &=& $c^\ominus(U \setminus X, V\setminus (U\setminus X))$ & & \\ & &-& $c^\ominus(U\setminus X,V\setminus (X\cup U))$ &+& $c^\ominus(X,V\setminus (X\cup U))$	
	\end{tabularx}\medskip

	\noindent
	Since $U\setminus X \subseteq U$, it is $c^\ominus(U\setminus X,V\setminus(X\cup U)) \leq  c^\ominus(U,V\setminus (X\cup U))$. From
	$X\setminus U \subseteq X$ further follows that $c^\ominus(X\setminus U,V\setminus (X\cup U)) \leq c^\ominus(X,V\setminus (X\cup U))$; together with the assumption that
	$\cmm(U, V\setminus U) < \cmm(U \setminus X, V\setminus (U\setminus X))$, we see the following if we subtract \textit{(ii)} from \textit{(i)}:
	
	\begin{eqnarray}
	c(X\setminus U,V\setminus (X\setminus U)) &-& c(X,V\setminus X)\nonumber \\ 
	& = & [c^\ominus(U, V\setminus U)- c^\ominus(U\setminus X, V\setminus(U\setminus X))]\nonumber\\
	& - & [c^\ominus(U, V\setminus (X\cup U)) - c^\ominus(U\setminus X, V\setminus(X\cup U))]\nonumber\\
	& + & [c^\ominus(X\setminus U,V\setminus (X\cup U)) - c^\ominus(X, V\setminus (X\cup U))] < 0\nonumber
	\end{eqnarray}
	This contradicts the fact that $(X,V\setminus X)$ is a minimum $x$-$v$-cut in $G$.
\end{proof}

\noindent
Since any new cheaper cut found in line~\ref{ln:cut} needs to separate $b$ and $d$, we can apply Theorem~\ref{theo:delSelteredByOld} to this cut regarding the old cuts that are induced by the other thin edges $\{x,v\}$ incident to $v$. As a result, the new cut gets reshaped without changing its cost such that each subtree rooted at a vertex $x$ is completely assigned to either side of the reshaped cut (line~\ref{ln:for}), depending on if the new cut separates $x$ and $v$  (cp. Figure~\ref{fig:delalgol}). Furthermore, Lemma~\ref{lem:shelteredByPrev} allows the reshaping of the new cut along the cuts induced by the fat edge on $\pi(b,d)$ that are incident to $v$. This ensures that the new cut does not cross parts of $T_*$ that are beyond these flanking fat edges. Since after the reshaping exact one vertex adjacent to~$v$ on $\pi(b,d)$ ends up on the same cut side as $u$, $u$ finally becomes a part of $\pi(b,d)$.

It remains to show that after the reconnection the reconnected edges are still incident to one of their cut pairs in $\Gm$ (for fat edges) and $G$ (for thin edges), respectively. For fat edges this holds according to Lemma~\ref{lem:cut_pairs}. For thin edges the order in line~\ref{ln:sort} guarantees that an edge $\{x,v\}$ that will be reconnected to $u$ in line~\ref{ln:for} is at most as expensive as the current edge $\{u,v\}$, and thus, also induces a minimum $u$-$x$-cut in $G$.
This allows applying Lemma~\ref{lem:unfoldDel} and~\ref{lem:reuseByCosts} as well as the comparison in line~\ref{ln:recomp} to reconnected thin edges, too.
Observe that an edge corresponding to a bridge never crosses a new cheaper cut, and thus, gets never reconnected.
In the end all edges in $T_*$ are fat, since each edge is either a part of a reused subtree or was considered in line~\ref{ln:steppairs}. Note that reconnecting a thin edge makes this edge incident to a vertex on $\pi(b,d)$ and decrements the hight of the related subtree.
%
\section{Performance of the Algorithm}\label{sec:Perform}
Unfortunately we cannot  give a meaningful guarantee on the number of saved cut computations. The saving depends on the length of the path $\pi(b,d)$, the number of $\{u,v\} \in E_T$ for which the connectivity $\lambda(u,v)$ changes, and the shape of the cut tree. 
In a star, for example, there exist no subtrees that could be reused by Lemma~\ref{lem:unfoldDel} (see Figure~\ref{fig:perform} (left) for a bad case example for edge deletion). 
Nevertheless, a first experimental proof of concept promises high practicability, particularly on graphs with less regular cut structures.
The instance we use is a network of e-mail communications within the Department of Informatics at KIT~\cite{-ddiki-11}. Vertices represent members, edges correspond to e-mail contacts, weighted by the number of e-mails sent between two individuals during the last 72 hours. We process a queue of 924\,900 elementary changes, which indicate the time steps in Figure~\ref{fig:perform} (right), and $923\,031$
of which concern edges.
 We start with an empty graph, constructing the network from scratch.
Figure~\ref{fig:perform} shows the ratio of cuts computed by the update algorithm and cuts needed by the static approach until the particular time step. The ratio is shown in total, and broken down to edge insertions (151\,169 occurrences), increasing costs (310473), edge deletions (151\,061) and decreasing costs (310\,328). The trend of the curves follows the evolution of the graph, which slightly densifies around time step $100\,000$ due to a spam-attack;
however, the update algorithm needs less than $4\%$ of the static computations even during this period.
We further observe that for decreasing costs, Theorem~\ref{theo:delSelteredByOld} together with Lemma~\ref{lem:shelteredByPrev} allows to contract all subtrees incident to the current vertex $v$ on $\pi(b,d)$, which shrinks the underlying graph to $deg_*(v)$ vertices, with $deg_*(v)$ the degree of $v$ in $T_*$. Such contractions could further speed up the single cut computations. Similar shrinkings can obviously be done for increasing costs, as well.
\begin{figure}[t]
\centering
\includegraphics[width = 15.5cm]{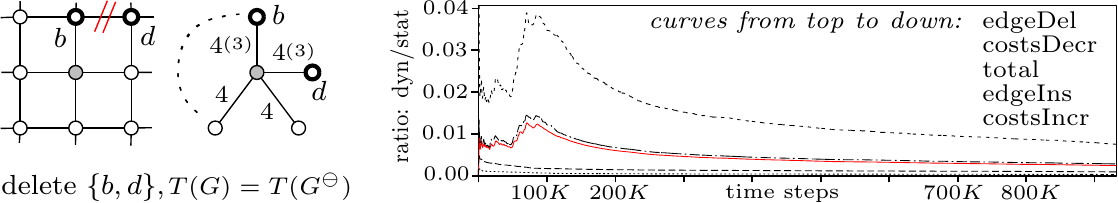}
\caption{left: $T(G)$ could be reused (new cost on $\pi(b,d)$ in brackets), but Alg.~\ref{alg:edgeDel} computes $n-3$ cuts. right: Cumulative ratio of dynamic and static cut computations.}
\label{fig:perform}
\end{figure}
%
\paragraph{Conclusion.}
We introduced a simple and  fast algorithm for dynamically updating a cut tree for a changing graph. In a first prove of concept our approach allowed to save over~$96\%$ of the cut computations and it provides even more possibilities for effort saving due to contractions. 
Currently, we are working on a more extensive experimental study, which we will present here as soon as we have finished. 
Recently, we further succeeded in improving the routine for an inserted edge or an increased cost such that it guarantees that each cut that remains valid is also represented by the new cut tree. This yields a high temporal smoothness, which is desirable in many applications. Note that the routine for a deleted edge or a decreased cost as presented in this work already provides this temporal smoothness.
%

\putbib[fastSimpleDynCutTree_tr]
 \end{bibunit}

\end{document}